\newtheorem{theorem}{Theorem}
\newtheorem{lemma}{Lemma}
\newcommand{\shortOnly}[1]{\ifthenelse{\boolean{short}}{#1}{}}
\newcommand{\onlyShort}[1]{\ifthenelse{\boolean{short}}{#1}{}}
\newcommand{\longOnly}[1]{\ifthenelse{\boolean{short}}{}{#1}}
\newcommand{\onlyLong}[1]{\ifthenelse{\boolean{short}}{}{#1}}
\def\polylog{\operatorname{polylog}}
\def\cA{\mathcal{A}}
\def\cC{\mathcal{C}}
\def\cM{\mathcal{M}}
\def\cP{\mathcal{P}}
\def\cT{\mathcal{T}}
\newcommand{\congest}{\ensuremath{\mathcal{CONGEST}}\xspace}
\newcommand{\avec}{\mathbf{a}}
\newcommand{\sketch}{\mathbf{s}}
\renewcommand{\paragraph}[1]{\medskip\noindent{\bf #1.}\xspace}
\newcommand*\samethanks[1][\value{footnote}]{\footnotemark[#1]}
\title{Fast Distributed Algorithms for Connectivity and MST in Large Graphs}
\author{Gopal Pandurangan\thanks{Department of Computer Science, University of Houston, Houston, TX 77204, USA.
 \hbox{E-mail}:~{\tt gopalpandurangan@gmail.com\textnormal{,} michele@cs.uh.edu}. Supported, in part, by US-Israel Binational Science Foundation grant 2008348, NSF grant CCF-1527867, and NSF grant CCF-1540512.}
 \and Peter Robinson\thanks{Department of Computer Science, Royal Holloway, University of London, United Kingdom.
 \hbox{E-mail}:~\texttt{peter.robinson@rhul.ac.uk}.}
 \and Michele Scquizzato\samethanks[1]
}
\begin{document}

\maketitle

\begin{abstract}
Motivated by the increasing need to understand the algorithmic foundations of distributed large-scale graph
computations, we study a number of fundamental graph problems in a message-passing model for distributed
computing where $k \geq 2$ machines jointly perform computations on graphs with $n$ nodes (typically, $n \gg k$).
The input graph is assumed to be initially randomly partitioned among the $k$ machines, a common
implementation in many real-world systems. Communication is point-to-point, and the goal is to minimize
the number of communication rounds of the computation.
 
Our main result is an (almost) optimal  distributed randomized algorithm for graph connectivity.
Our algorithm runs in $\tilde{O}(n/k^2)$ rounds ($\tilde{O}$ notation hides a $\polylog(n)$ factor and
an additive $\polylog(n)$ term). This improves over the best previously known bound of $\tilde{O}(n/k)$
[Klauck et al., SODA 2015], and is optimal (up to a polylogarithmic factor) in view of an existing lower
bound of $\tilde{\Omega}(n/k^2)$. Our improved algorithm uses a bunch of techniques, including linear graph
sketching, that prove useful in the design of efficient distributed graph algorithms.
Using the connectivity algorithm as a building block, we then present fast randomized algorithms
for computing minimum spanning trees, (approximate) min-cuts, and for many graph
verification problems. All these algorithms take $\tilde{O}(n/k^2)$ rounds, and are optimal up to polylogarithmic factors.
We also show an almost matching lower bound of $\tilde{\Omega}(n/k^2)$ rounds for many graph verification problems
by leveraging lower bounds in random-partition communication complexity.

\end{abstract}

\newpage

\section{Introduction}

The focus of this paper is on distributed computation on large-scale graphs, which is increasingly becoming important
with the rise of massive graphs such as the Web graph, social networks, biological networks, and other graph-structured data
and the consequent need for fast algorithms to process such graphs. Several large-scale graph processing systems such
as Pregel~\cite{pregel} and Giraph~\cite{giraph} have been recently designed based on the message-passing
distributed computing model \cite{Lynch96,Peleg00}. We study a number of fundamental graph problems in a
model which abstracts the essence of these graph-processing systems, and present almost tight bounds on the
time complexity needed to solve these problems.  %
In this model, introduced in~\cite{KlauckNPR15} and explained in detail in Section~\ref{sec:model},
the input graph is distributed across a group of $k \geq 2$ machines that are pairwise interconnected via a communication network. 
The $k$ machines jointly perform computations on an arbitrary $n$-vertex input graph, where typically $n \gg k$.
The input graph is assumed to be initially randomly partitioned among the $k$ machines (a common
implementation in many real world graph processing systems~\cite{pregel,Stanton14}). Communication
is point-to-point via message passing. The computation advances in synchronous rounds, and
there is a constraint on the amount of data that can cross each link of the network in each round. 
The goal is to minimize the time complexity, i.e., the number of rounds required by the computation.
This model is aimed at investigating the amount of ``speed-up'' possible vis-a-vis the number of available machines,
in the following sense: when $k$ machines are used, how does the time complexity scale in $k$? Which problems
admit linear scaling? Is it possible to achieve super-linear scaling?

Klauck et al.~\cite{KlauckNPR15} present lower and upper bounds for several fundamental graph
problems in the $k$-machine model. In particular, assuming that each link has a bandwidth of
one bit per round, they show a lower bound of $\tilde \Omega(n/k^2)$ rounds for the graph
connectivity problem.\footnote{Throughout this paper $\tilde O(f(n))$ denotes $O(f(n) \polylog n
+ \polylog n)$, and $\tilde \Omega(f(n))$ denotes $\Omega(f(n)/\polylog n)$.}
They also present an $\tilde{O}(n/k)$-round algorithm for graph connectivity and spanning tree (ST) verification.
This algorithm thus exhibits a scaling linear in the number of machines $k$. The question of existence of a faster algorithm,
and in particular of an algorithm matching the $\tilde \Omega(n/k^2)$ lower bound, was left open in~\cite{KlauckNPR15}.
In this paper we answer this question affirmatively by presenting an $\tilde O(n/k^2)$-round algorithm for graph connectivity,
thus achieving a speedup quadratic in $k$. This is optimal up to polylogarithmic (in $n$) factors.

This result is important for two reasons. First, it shows that there are non-trivial graph problems for
which we can obtain \emph{superlinear} (in $k$) speed-up. To elaborate further on this point, we shall
take a closer look at the proof of the lower bound for connectivity shown in~\cite{KlauckNPR15}.
Using communication complexity techniques, that proof shows that any (possibly randomized) algorithm
for the graph connectivity problem has to exchange $\tilde{\Omega}(n)$ bits of information across the $k$ machines,
for any $k \geq 2$. Since there are $k(k-1)/2$ links in a complete network with $k$ machines, when
each link can carry $O(\polylog(n))$ bits per round, in each single round the network can deliver at most
$\tilde{\Theta}(k^2)$ bits of information, and thus a lower bound of $\tilde{\Omega}(n/k^2)$ rounds follows.
The result of this paper thus shows that it is possible to exploit in full the available bandwidth, thus achieving
a speed-up of $\tilde{\Theta}(k^2)$. 
Second, this implies that many other important graph problems can be solved in $\tilde{O}(n/k^2)$ rounds as well.
These include computing a spanning tree, minimum spanning tree (MST), approximate min-cut, and many
verification problems such as spanning connected subgraph, cycle containment, and bipartiteness.

It is important to note that  under a different output requirement (explained next) 
there exists a $\tilde \Omega(n/k)$-round lower bound for computing a spanning tree of
a graph~\cite{KlauckNPR15}, which also implies the same lower bound for other fundamental problems
such as computing an MST, breadth-first tree, and shortest paths tree. 
However, this lower bound holds under the requirement that each vertex (i.e., the machine which hosts the vertex) must know
at the end of the computation the ``status'' of all of its incident edges, that is, whether they belong to an ST or not, and output
their respective status. (This is the output criterion that is usually required in distributed algorithms~\cite{Lynch96,Peleg00}.)
The proof of the lower bound exploits this criterion to show that any algorithm requires some machine receiving
$\Omega(n)$ bits of information, and since any machine has $k-1$ incident links, this results in a $\tilde \Omega(n/k)$ lower bound.
On the other hand, if we relax the output criterion to require the final status of each edge to be known by \emph{some}
machine, then we show that this can be accomplished in $\tilde{O}(n/k^2)$ rounds using the fast connectivity algorithm of this paper.

\subsection{The Model}\label{sec:model}

We now describe the adopted model of distributed computation, the \emph{$k$-machine model}
(a.k.a.\ the \emph{Big Data model}), introduced in~\cite{KlauckNPR15} and further investigated
in~\cite{fanchung,spidal,PanduranganRS16}. The model consists of a set of $k \geq 2$ machines $N = \{M_1,M_2,\dots,M_k\}$ 
that are pairwise interconnected by bidirectional point-to-point communication links.
Each machine executes an instance of a distributed algorithm. The computation advances
in synchronous rounds where, in each round, machines can exchange messages over their
communication links and perform some local computation. Each link is assumed to have
a bandwidth of $O(\polylog(n))$ bits per round, i.e., $O(\polylog(n))$ bits can be
transmitted over each link in each round. (As discussed in~\cite{KlauckNPR15} (cf.\ Theorem 4.1), it is easy to rewrite bounds to
scale in terms of the actual inter-machine bandwidth.)
 Machines do not share any memory and have no other
means of communication. There is an alternate (but equivalent) way to view this communication
restriction: instead of putting a bandwidth restriction on the links, we can put a restriction on
the amount of information that each \emph{machine} can communicate (i.e., send/receive)
in each round.  The results that we obtain in the bandwidth-restricted model will also apply
to the latter model~\cite{KlauckNPR15}.
Local computation within a machine is considered to happen instantaneously at zero cost, while
the exchange of messages between machines is the costly operation. (However, we note that in
all the algorithms of this paper, every machine in every round performs a computation bounded by a polynomial in $n$.)
We assume that each machine has access to a private source of true random bits.

Although the $k$-machine model is a fairly general model of computation, we are mostly interested in studying
graph problems in it. Specifically, we are given an input graph $G$ with $n$ vertices, each associated with a unique
integer ID from $[n]$, and $m$ edges. To avoid trivialities, we will assume that $n \geq k$ (typically, $n \gg k$).
Initially, the entire graph $G$  is not known by any single machine, but rather partitioned among the $k$ machines in a
``balanced'' fashion, i.e., the nodes and/or edges of $G$ are partitioned approximately evenly among the machines.
We assume a {\em  vertex-partition} model, whereby vertices, along with information of their incident edges,
are partitioned across machines. Specifically, the type of partition that we will assume throughout is the
{\em random vertex partition (RVP)}, that is, each vertex of the input graph is assigned randomly to one machine.
(This is the typical way used by many real systems, such as Pregel~\cite{pregel}, to partition the input graph
among the machines; it is easy to accomplish, e.g., via hashing.\footnote{In Section~\ref{sec:related} we will discuss an
alternate partitioning model, the \emph{random edge partition (REP)} model, where each edge of $G$ is
assigned independently and randomly to one of the $k$ machines, and show how the results in the random
vertex partition model can be related to the random edge partition model.})
However, we notice that our upper bounds also hold under the much weaker assumption whereby it is only
required that nodes and edges of the input graph are partitioned approximately evenly among the machines;
on the other hand, lower bounds under RVP clearly apply to worst-case partitions as well.

More formally, in the random vertex partition variant,  each vertex of $G$ is assigned independently and uniformly at
random to one of the $k$ machines. If a vertex $v$ is assigned to machine $M_i$ we say that $M_i$ is the {\em home machine}
of $v$ and, with a slight abuse of notation, write $v \in M_i$. When a vertex is assigned to a machine, {\em all its
incident edges} are assigned to that machine as well; i.e., the home machine will know the {IDs
of the neighbors of that vertex as well as the identity of the home machines of the neighboring vertices (and
the weights of the corresponding edges in case $G$ is weighted).
Note that an immediate property of the RVP model is that the number of vertices at each machine is \emph{balanced},
i.e., each machine is the home machine of $\tilde \Theta(n/k)$ vertices with high probability.
A convenient way to implement  the RVP model is through hashing: each vertex (ID)
is hashed to one of the $k$ machines. Hence, if a machine knows a vertex ID, it also knows where it is hashed to. 

Eventually, each machine $M_i$, for each $1 \leq i \leq k$, must set a designated local output variable $o_i$ (which need
not depend on the set of vertices assigned to $M_i$), and the \emph{output configuration} $o=\langle o_1,\dots,o_k\rangle$
must satisfy certain feasibility conditions for the problem at hand. For example, for the minimum spanning tree problem
each $o_i$ corresponds to a set of edges, and the edges in the union of such sets must form an MST of the input graph.

In this paper, we show results for distributed algorithms that are Monte Carlo.
Recall that a Monte Carlo algorithm is a randomized algorithm whose output may be incorrect with
some probability. Formally, we say that an algorithm computes a function $f$ with $\epsilon$-\emph{error}
if for every input it outputs the correct answer with probability at least $1 - \epsilon$, where
the probability is over the random partition and the random bit strings used by the algorithm (if any).
The \emph{round (time) complexity} of an algorithm is the maximum number of communication rounds until termination. 
For any $n$ and problem $\cP$ on $n$ node graphs, we let the {\em time complexity of solving $\cP$ with $\epsilon$ error probability} in the $k$-machine model, denoted by $\cT_\epsilon(\cP)$, be the minimum $T(n)$ such that there exists an $\epsilon$-error protocol that solves $\cP$ and terminates in $T(n)$ rounds. 
For any $0\leq \epsilon\leq 1$, graph problem $\cP$ and function $T:\mathbb{Z}_+\rightarrow \mathbb{Z}_+$, we say that $\cT_\epsilon(\cP)=O(T(n))$ if there exists integer $n_0$ and $c$ such that for all $n\geq n_0$, $\cT_\epsilon(\cP)\leq cT(n)$. Similarly, we say that $\cT_\epsilon(\cP)=\Omega(T(n))$ if there exists integer $n_0$ and real $c$ such that for all $n\geq n_0$, $\cT_\epsilon(\cP)\geq cT(n)$. For our upper bounds, we will usually use $\epsilon = 1/n$, which will imply high probability algorithms, i.e., succeeding with probability at least $1 - 1/n$. In this case, we will sometimes just omit $\epsilon$ and simply say 
the time bound applies ``with high probability.''

\subsection{Our Contributions and Techniques}
The main result of this paper, presented in Section~\ref{sec:algorithm}, is a randomized Monte Carlo algorithm in the $k$-machine model that determines
the connected components of an undirected graph $G$ correctly with high probability and that terminates in
$\tilde O(n/k^2)$ rounds.\footnote{Since the focus is on
the scaling of the time complexity with respect to $k$, we omit explicitly stating
the polylogarithmic factors in our run time bounds. However, the hidden polylogarithmic factor is not large---at most $O(\log^3 n)$.}
This improves upon the previous best bound of $\tilde{O}(n/k)$~\cite{KlauckNPR15},
since it is strictly superior in the wide range of parameter $k = \Theta(n^{\epsilon})$, for all constants $\epsilon \in (0,1)$.
Improving over this bound is non-trivial since various attempts to get a faster connectivity algorithm fail due to
the fact that they end up congesting a particular machine too much, i.e., up to $n$ bits may need to be sent/received
by a machine, leading to a $\tilde O(n/k)$ bound (as a machine has only $k-1$ links). For example, a simple algorithm for
connectivity is simply flooding: each vertex floods the lowest labeled vertex that it has seen so far; at the end each vertex will
have the label of the lowest labeled vertex in its component.\footnote{This algorithm has been implemented
in a variant of Giraph~\cite{TianBCTM13}.} It can be shown that the above algorithm
takes $\Theta(n/k + D)$ rounds (where $D$ is the graph diameter) in the $k$-machine model by using the
Conversion Theorem of~\cite{KlauckNPR15}. Hence new techniques are needed to break the $n/k$-round barrier.
 
Our connectivity algorithm is the result of the application of the following three techniques.

 \emph{1.\ Randomized Proxy Computation.} This technique, similar to known techniques used in randomized routing algorithms~\cite{Valiant82},
is used to load-balance congestion at any given machine by
redistributing it evenly across the $k$ machines. This is achieved, roughly speaking, by re-assigning the executions of individual nodes
uniformly at random among the machines. 
It is crucial to distribute the computation {\em and} communication across machines
to avoid congestion at any particular machine. In fact, this allows one to move away from the
communication pattern imposed by the topology of the input graph (which can cause congestion at a particular machine)
to a more balanced communication. 

 \emph{2.\ Distributed Random Ranking (DRR).} DRR~\cite{ChenP12} is a simple technique that will be used to build trees of low
height in the connectivity algorithm. Our connectivity algorithm is divided into phases, in each of which we do the following:
each current component (in the first phase, each vertex is a component by itself) chooses one {\em outgoing edge} and then components
are combined by merging them along outgoing edges.  If done naively, this may result in a long chain of merges, resulting in a component
tree of high diameter; communication along this tree will then take a long time. To avoid this we resort to DRR, which suitably reduces the
number of merges. With DRR, each component chooses a random {\em rank}, which is simply a random number, say in the interval $[1,n^3]$;
a component $C_i$ then merges with the component $C_j$ on the other side of its selected outgoing edge if and only if
the rank of $C_j$ is larger than the rank of $C_i$. Otherwise, $C_i$ does not merge with $C_j$, and thus
it becomes the root of a DRR tree, which is a tree induced by the components and the set of the outgoing edges that have been used
in the above merging procedure. It can be shown that the height of a DRR tree is bounded by $O(\log n)$ with high probability. 

 \emph{3.\ Linear Graph Sketching.} Linear graph sketching~\cite{AhnGM12a,AhnGM12b,McGregor14} is crucially helpful
in efficiently finding an outgoing edge of a component. A \emph{sketch} for a vertex (or a component)
is a short ($O(\polylog n)$) bit vector that efficiently encodes the adjacency list of the vertex. Sampling from this sketch
gives a random (outgoing) edge of this vertex (component). A very useful property is the linearity of the sketches:
adding the sketches of a set of vertices gives the sketch of the component obtained by combining the vertices; the edges between
the vertices (i.e., the intra-component edges) are automatically ``cancelled'', leaving only a sketch of the outgoing edges. Linear
graph sketches were originally used to process dynamic graphs in the (semi-) streaming model~\cite{AhnGM12a,AhnGM12b,McGregor14}. 
Here, in a distributed setting, we use them to reduce the amount of communication needed to find an outgoing edge;
in particular, graph sketches will avoid us from checking whether an edge is an inter-component or an intra-component edge,
and this will crucially reduce communication across machines. We note that earlier distributed algorithms such as the classical
GHS algorithm~\cite{GallagerHS83} for the MST problem would incur too much communication since they involve checking the
status of each edge of the graph.

We observe that it does not seem straightforward to effectively exploit these techniques in the $k$-machine model: for
example, linear sketches can be easily applied in the distributed streaming model by sending to a coordinator machine the
sketches of the partial stream, which then will be added to obtain the sketch of the entire stream. Mimicking
this trivial strategy in the $k$-machine model model would cause too much congestion at one node, leading
to a $\tilde O(n/k)$ time bound.

Using the above techniques and the fast connectivity algorithm, in Section~\ref{sec:applications} we give algorithms
for many other important graph problems. In particular, we present a $\tilde{O}(n/k^2)$-round algorithm for
computing an MST (and hence an ST). We also present $\tilde{O}(n/k^2)$-round algorithms for 
approximate min-cut, and for many graph verification problems including spanning connected subgraph, cycle containment, and bipartiteness.
All these algorithms are optimal up to a polylogarithmic factor.

In \Cref{sec:LBs} we show a lower bound of  $\tilde{\Omega}(n/k^2)$ rounds for many verification problems
by simulating the $k$-machine model in a $2$-party model of communication complexity where the inputs are randomly assigned to the players.

\subsection{Related Work}\label{sec:related}
The theoretical study of large-scale graph computations in distributed systems is relatively new. 
Several works have been devoted to developing MapReduce graph algorithms
(see, e.g., \cite{soda-mapreduce,filtering-spaa,LeskovecRU14} and references therein).  
We note that the flavor of the theory developed for MapReduce is quite different compared
to the one for the $k$-machine model. Minimizing communication is also the key goal in MapReduce algorithms; however this is
usually achieved by making sure that the data is made small enough quickly (that is, in a small number of MapReduce rounds)
to fit into the memory of a single machine (see, e.g., the MapReduce algorithm for MST in~\cite{filtering-spaa}).

For a  comparison of the $k$-machine model with other models for parallel and distributed processing, including Bulk-Synchronous Parallel (BSP) model~\cite{bsp},  MapReduce~\cite{soda-mapreduce}, and the congested clique, we refer to \cite{grigory-blog}. 
In particular, according to~\cite{grigory-blog},
``Among all models with restricted communication the ``big data'' [$k$-machine] model is the one most similar to the MapReduce model".  

The $k$-machine model is closely related to the BSP model; it can be considered to be a simplified version of BSP,
where the costs of local computation and of synchronization (which happens at the end of every round) are ignored.
Unlike the BSP and refinements thereof, which have several different parameters that make the analysis of algorithms
complicated~\cite{grigory-blog}, the $k$-machine model is characterized by just one parameter, the number of machines;
this makes the model simple enough to be analytically tractable, thus easing the job of designing and analyzing algorithms,
while at the same time it still captures the key features of large-scale distributed computations.

The $k$-machine model is related to the classical \congest model~\cite{Peleg00}, and in
particular to the \emph{congested clique} model, which recently has received considerable attention (see,
e.g., \cite{LotkerPPP05,LenzenW11,Lenzen13,DruckerKO14,Nanongkai14,Censor-HillelKKLPS15,HegemanPPSS15}).
The main difference is that the $k$-machine model is aimed at the study of large-scale computations,
where the size $n$ of the input is significantly bigger than the number of available machines $k$, and thus
many vertices of the input graph are mapped to the same machine, whereas the two aforementioned models are aimed
at the study of distributed network algorithms, where $n = k$ and thus each vertex corresponds to a dedicated machine.
More ``local knowledge'' is
available per vertex (since it can access for free information about other vertices in the same machine) in the $k$-machine
model compared to the other two models. On the other hand, all vertices assigned to a machine have to communicate
through the links incident on this machine, which can limit the bandwidth (unlike the other two models where
each vertex has a dedicated processor). These differences manifest in the time complexity. In particular, the fastest
known distributed algorithm in the congested clique model for a given problem may not give rise to the fastest
algorithm in the $k$-machine model. For example, the fastest algorithms for MST in the congested clique model
(\cite{LotkerPPP05,HegemanPPSS15}) require $\Theta(n^2)$ messages; implementing
these algorithms in the $k$-machine model requires $\Theta(n^2/k^2)$ rounds. Conversely, the slower GHS
algorithm~\cite{GallagerHS83} gives an  $\tilde{O}(n/k)$ bound in the $k$-machine model.
The recently developed techniques (see, e.g., \cite{DasSarmaHKKNPPW12,podc11,podc14,HegemanPPSS15,DruckerKO14})
used to prove time lower bounds in the standard \congest model and in the congested clique model are not
directly applicable here.

The work closest in spirit to ours is the recent work of Woodruff and Zhang~\cite{woodruff}.
This paper considers a number of basic statistical and graph problems in a distributed
message-passing model similar to the $k$-machine model.
However, there are some important differences. First, their model is asynchronous, and the cost function is the communication complexity,
which refers to the total number of bits exchanged by the machines during the computation.
Second,  a {\em worst-case} distribution of the input is assumed, while we assume a random distribution.
Third, which is an important difference, they assume an edge partition model for the problems on graphs, that is, the edges of the graph (as opposed to its vertices) are partitioned across the $k$ machines. In particular, for the connectivity problem, they show a
message complexity lower bound of $\tilde{\Omega}(nk)$ which essentially translates to a $\tilde{\Omega}(n/k)$ round lower bound in the $k$-machine model; it can be shown by using their proof technique   that this lower bound also applies to the {\em random edge partition (REP)} model, where edges are partitioned randomly among machines, as well. On the other hand, it is easy to show an
$\tilde{O}(n/k)$ upper bound for the connectivity in the REP model for connectivity and MST.\footnote{The high-level idea of the MST algorithm in the REP model is: (1) First ``filter'' the edges assigned to one machine using the cut and cycle properties of a MST~\cite{KargerKT95}; this leaves each machine with $O(n)$ edges;
(2) Convert this edge distribution to a RVP which can be accomplished in $\tilde{O}(n/k)$ rounds via hashing the vertices randomly to machines and then routing the edges appropriately; then apply the RVP bound.} Hence, in the REP model, $\tilde{\Theta}(n/k)$ is a tight bound for connectivity and other related problems such as MST.
However, in contrast, in the RVP model (arguably, a more natural partition model), we show that $\tilde{\Theta}(n/k^2)$ is the tight bound. Our results are a  step towards a better understanding
of the complexity of distributed graph computation vis-a-vis  the partition model.

From the technical point of view, King et al.~\cite{KingKT15} also use an idea similar to linear sketching.
Their technique might also be useful in the context of the $k$-machine model.

\section{The Connectivity Algorithm}\label{sec:algorithm}
In this section we present our main result, a Monte Carlo randomized algorithm for the $k$-machine
model that determines the connected components of an undirected graph $G$ correctly
with high probability and that terminates in $\tilde O(n/k^2)$ rounds with high probability.
This algorithm is optimal, up to $\polylog(n)$-factors, by virtue of a lower bound of
$\tilde\Omega(n/k^2)$ rounds~\cite{KlauckNPR15}.

Before delving into the details of our algorithm, as a warm-up we briefly discuss simpler, but less efficient, approaches. 
The easiest way to solve \emph{any} problem in our model is to first collect all available graph data at a single
machine and then solve the problem locally. For example, one could first elect a referee among the machines,
which requires $O(1)$ rounds~\cite{KuttenPPRT15}, and then instruct every machine to send its local data to the
referee machine. Since the referee machine needs to receive $O(m)$ information in total but has only $k-1$ links
of bounded bandwidth, this requires $\Omega(m/k)$ rounds. 

A more refined approach to obtain a distributed algorithm for the $k$-machine model is to use the Conversion
Theorem of~\cite{KlauckNPR15}, which provides a simulation of a congested clique algorithm $\cA$ in
$\tilde O(M/k^2 + \Delta' T /k)$ rounds in the $k$-machine model, where $M$ is the message complexity
of $\cA$, $T$ is its round complexity, and $\Delta'$ is an upper bound to the total number of messages
sent (or received) by a single node in a single round. (All these parameters refer to the performance of $\cA$ in the congested
clique model.) Unfortunately, existing algorithms (e.g., \cite{GallagerHS83,Thurimella97}) typically require
$\Delta'$ to scale to the maximum node degree, and thus the converted time complexity
bound in the $k$-machine model becomes $\tilde \Omega(n/k)$ at best.
Therefore, in order to break the $\tilde\Omega(n/k)$ barrier, we must develop new techniques that directly
exploit the additional locality available in the $k$-machine model. 

In the next subsection we give a high level overview of our algorithm, and then formally present all
the technical details in the subsequent subsections.

\subsection{Overview of the Algorithm}\label{sec:overview}
Our algorithm follows a Boruvka-style strategy~\cite{Boruvka26}, that is, it repeatedly merges
adjacent \emph{components} of the input graph $G$, which are connected subgraphs of $G$, to form
larger (connected) components. The output of each of these phases is a \emph{labeling} of the nodes of $G$
such that nodes that belong to the same current component have the same label. At the beginning of
the first phase, each node is labeled with its own unique ID, forms a distinct component, and is also the
\emph{component proxy} of its own component. Note that, at any phase, a component contains up
to $n$ nodes, which might be spread across different machines; we use the term \emph{component part}
to refer to all those nodes of the component that are held by the same machine. Hence, at any phase
every component is partitioned in at most $k$ component parts. At the end of the algorithm each vertex
has a label such that any two vertices have the same label if and only if they belong to the same connected
component of $G$.

Our algorithm relies on \emph{linear graph sketches} as a tool to enable communication-efficient
merging of multiple components. Intuitively speaking, a (random) linear sketch $\sketch_u$ of a node $u$'s
graph neighborhood returns a sample chosen uniformly at random from $u$'s incident edges.
Interestingly, such a linear sketch can be represented as matrices using only $O(\polylog(n))$
bits~\cite{JowhariST11,McGregor14}. A crucial property of these sketches is that they are linear:
that is, given sketches $\sketch_u$ and $\sketch_v$, the \emph{combined sketch} $\sketch_u + \sketch_v$
(``+'' refers to matrix addition) has the property that, w.h.p., it yields a random sample of the edges incident to
$(u,v)$ in a graph where we have contracted the edge $(u,v)$ to a single node. We describe the technical details
in \Cref{sec:sketches}.

We now describe how to communicate these graph sketches in an efficient manner:
Consider a component $C$ that is split into $j$ parts $P_1,P_2,\dots,P_j$, the nodes of which
are hosted at machines $M_1,M_2,\dots,M_j$. To find an outgoing edge for $C$, we first instruct each
machine $M_i$ to construct a linear sketch of the graph neighborhood of each of the nodes in part $P_i$.
Then, we sum up these $|P_i|$ sketches, yielding a sketch $\sketch_{P_i}$ for the neighborhood of part $P_i$. 
To combine the sketches of the $j$ distinct parts, we now select a \emph{random component proxy}
machine $M_{C,r}$ for the current component $C$ at round $r$ (see Section~\ref{sec:proxy}).
Next, machine $M_i$ sends $\sketch_{P_i}$ to machine $M_{C,r}$; note that this causes at most $k$ messages
to be sent to the component proxy. Finally, machine $M_{C,r}$ computes $\sketch_C = \sum_{i=1}^j \sketch_{P_i}$,
and then uses $\sketch_C$ to sample an edge incident to some node in $C$, which, by construction, is guaranteed
to have its endpoint in a distinct component $C'$. (See Section~\ref{sec:edge}.)

At this point, each component proxy has sampled an inter-component edge
inducing the edges of a \emph{component graph} $\cC$ where each vertex corresponds to a component.
To enable the efficient merging of components, we employ the \emph{distributed random ranking} (DRR) technique of~\cite{ChenP12}
to break up any long paths of $\cC$ into more manageable directed trees of depth $O(\log n)$.
To this end, every component chooses a rank independently
and uniformly at random from $[0,1]$,\footnote{It is easy to see that an accuracy of $\Theta(\log n)$ bits suffices to break ties w.h.p.} and each component (virtually) connects to its neighboring component (according to $\cC$) via a (conceptual) directed edge if and only if the latter has a higher rank. 
Thus, this process results in a collection of disjoint rooted trees, rooted at the node of highest (local) rank.
We show in \Cref{sec:drr} that each of such trees has depth $O(\log n)$.

The merging of the components of each tree $\cT$ proceeds from the leafs upward (in parallel for each tree).
In the first merging phase, each leaf $C_j$ of $\cT$ merges with its parent $C'$ by relabeling the component
labels of all of their nodes with the label of $C'$. Note that the proxy $M_{C_j}$ knows the labeling of $C'$,
as it has computed the outgoing edge from a vertex in $C_j$ to a vertex in $C'$.
Therefore, machine $M_{C_j}$ sends the label of $C_j$ to all the machines that hold a part of $C_j$.
In~\Cref{sec:drr} we show that this can be done in parallel (for all leafs of all trees) in $\tilde O(n/k^2)$ rounds.
Repeating this merging procedure $O(\log n)$ times, guarantees that each tree has been merged to a single component. 

Finally, in Section~\ref{sec:time} we prove that $O(\log n)$ repetitions of the above process suffice to ensure that
the components at the end of the last phase correspond to the connected components of the input graph $G$.

\subsection{Communication via Random Proxy Machines}\label{sec:proxy}
Recall that our algorithm iteratively groups vertices into components and subsequently merges such components
according to the topology of $G$. Each of these components may be split into multiple component parts spanning
multiple machines. Hence, to ensure efficient load balancing of the messages that machines need to send on
behalf of the component parts that they hold, the algorithm performs all communication via \emph{proxy machines}.

Our algorithm proceeds in phases, and each phase consists of iterations. 
Consider the $\rho$-iteration of the $j$-th phase of the algorithm, with $\rho,j \ge 1$.
We construct a ``sufficiently'' random hash function $h_{j,\rho}$, such that, for each component
$C$, the machine with ID $h_{j,\rho}(C) \in [k]$ is selected as the proxy machine for component $C$.
First, machine $M_1$ generates $\ell = \tilde \Theta(n/k)$ random bits from its private source of randomness.
$M_1$ will distribute these random bits to all other machines via the following simple routing mechanism that proceeds in sequences of two rounds.
$M_1$ selects $k$ bits $b_1,b_2,\dots,b_{k-1}$ from the set of its $\ell$ private random bits that remain to be distributed,
and sends bit $b_i$ across its $i$-th link to machine $M_{i+1}$.
Upon receiving $b_i$, machine $M_{i+1}$ broadcasts $b_i$ to all machines in the next round.
This ensures that bits $b_1,b_2,\dots,b_{k-1}$ become common knowledge within two rounds.
Repeating this process to distribute all the $\ell = \tilde \Theta(n/k)$ bits takes $\tilde O(n/k^2)$ rounds,
after those all the machines have the $\ell$ random bits generated by $M_1$.
We leverage a result of~\cite{alonBabai} (cf.\ in its formulation as Theorem~2.1 in~\cite{alon12}),
which tells us that we can generate a random hash function such that it is $d$-wise independent by using only $O(d \log n)$ true random bits. 
We instruct machine $M_1$ to disseminate $d=\ell\log n = n\polylog(n)/k$ of its random bits according
to the above routing process and then each machine locally constructs the same hash function $h_{j,\rho}$,
which is then used to determine the component proxies throughout iteration $\rho$ of phase $j$.

We now show that communication via such proxy machines is fast in the $k$-machine model.

\begin{lemma}\label{lem:proxyComm}
Suppose that each machine $M$ generates a message of size $O(\polylog(n))$ bits for each component part residing
on $M$; let $m_i$ denote the message of part $P_i$ and let $C$ be the component of which $P_i$ is a part.
If each $m_i$ is addressed to the proxy machine $M_C$ of component $C$, then all messages are delivered
within $\tilde O(n/k^2)$ rounds with high probability. 
\end{lemma}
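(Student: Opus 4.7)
The plan is to bound the number of messages traversing each directed link $(M_i,M_j)$ of the complete $k$-machine network, and then convert this link-load bound into a round bound using the $O(\polylog(n))$ per-link bandwidth.

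First, I would bound the number of messages generated at each machine. Under the random vertex partition, $M_i$ hosts $\tilde O(n/k)$ vertices w.h.p.; since each vertex belongs to exactly one component, at most $\tilde O(n/k)$ distinct component parts reside at $M_i$, so $M_i$ generates at most $\tilde O(n/k)$ messages. Fixing a source $M_i$ and a target $M_j$, I would then bound the load $L_{i,j}$ on link $(M_i,M_j)$, defined as the number of parts at $M_i$ whose proxy under $h_{j,\rho}$ is $M_j$. Letting $\mathcal{P}_i$ denote the set of parts at $M_i$, we have $L_{i,j} = \sum_{P \in \mathcal{P}_i}\mathbf{1}[h_{j,\rho}(C_P) = j]$. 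Two observations make this sum tractable: (i) distinct parts at $M_i$ belong to distinct components, so $L_{i,j}$ depends on only $|\mathcal{P}_i| \le \tilde O(n/k)$ distinct evaluations of $h_{j,\rho}$; and (ii) by construction, $h_{j,\rho}$ is $d$-wise independent with $d = \Theta((n/k)\polylog(n))$, which exceeds $|\mathcal{P}_i|$ w.h.p. Hence, relative to any fixed source $M_i$, the relevant indicators are \emph{fully} independent, so $\mathbb{E}[L_{i,j}] \le \tilde O(n/k^2)$, and a standard Chernoff bound gives $L_{i,j} = \tilde O(n/k^2)$ w.h.p.

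Next, I would union bound over the $O(k^2)$ links (and, if needed, over the polynomially many phases/iterations, absorbing the overhead into the $\polylog$ factor) to conclude that w.h.p.\ every link carries at most $\tilde O(n/k^2)$ messages of $O(\polylog(n))$ bits each. Since the $k$ machines form a complete network, every message can be routed in a single hop; and since each link has bandwidth $\Theta(\polylog(n))$ per round, $\tilde O(n/k^2)$ rounds suffice to deliver all messages, matching the claimed bound.

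The main obstacle is extracting strong enough concentration from a merely $d$-wise independent hash function. The key trick that avoids invoking Bellare--Rompel-style $d$-th moment tail inequalities is the observation that on any fixed source machine the hash function is queried on at most $\tilde O(n/k) \le d$ distinct component labels w.h.p.; restricted to that source, $d$-wise independence upgrades to full independence of the indicators $\mathbf{1}[h_{j,\rho}(C_P)=j]$, so the standard Chernoff bound applies verbatim. A secondary subtlety is to ensure the RVP bound $|\mathcal{P}_i| = \tilde O(n/k)$ holds simultaneously at every machine, which follows from a routine Chernoff plus union bound over the $k$ machines.
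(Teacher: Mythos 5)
Your proposal is correct and follows essentially the same argument as the paper's proof: both identify that the parts residing at a fixed source machine belong to pairwise-distinct components, so the $\tilde\Theta(n/k)$-wise independence of $h_{j,\rho}$ makes the relevant proxy choices fully independent from that machine's vantage point, after which a standard Chernoff bound on each link's load and a union bound finish the job. The only cosmetic difference is that the paper carries out the Chernoff step via an explicit case split on whether the number of parts $y$ at a machine is above or below $\Theta(k\log n)$ (to cleanly absorb the low-expectation regime into the additive $\polylog$ term of $\tilde O(\cdot)$), whereas you invoke the standard bound directly; the conclusion is the same.
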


\begin{proof}
Observe that, except for the very first phase of the algorithm, the claim does not immediately follow from
a standard balls-into-bins argument because not all the destinations of the messages are chosen independently and
uniformly at random, as any two distinct messages of the same component have the same destination.

Let us stipulate that any component part held by machine $M_i$ is the $i$-th component part of its component,
and denote this part with $P_{i,j}$, $i \in [k]$, $j \in [n]$, where $P_{i,j} = \emptyset$ means that in machine
$i$ there is no component part for component $j$. Suppose that the algorithm is in phase $j'$ and iteration $\rho$.
By construction, the hash function $h_{j',\rho}$ is $\tilde \Theta(n/k)$-wise independent, and all the component
parts held by a single machine are parts of different components.
Since $M_i$ has at most $\tilde \Theta(n/k)$ distinct component parts w.h.p., it follows that all the proxy
machines selected by the component parts held by machine $M_i$ are distributed independently and uniformly at random. 
Let $y$ be the number of distinct component parts held by a machine $M_i$ that is, $y = \vert\{P_{i,j} :  P_{i,j} \neq \emptyset\}\vert = \tilde O(n/k)$ (w.h.p.).

Consider a link of $M_i$ connecting it to another machine $M_1$. Let $X_t$ be the indicator variable that
takes value $1$ if $M_1$ is the component proxy of part $t$ (of $M_i$), and let $X_t=0$ otherwise.
Let $X = \sum_{i=1}^{y} X_i$ be the number of component parts that chose their proxy machine at the
endpoint of link $(M_i,M_1)$. Since $\text{Pr}(X_i = 1) = 1/(k-1)$, we have that
the expected number of messages that have to be sent by this machine over any specific link is
$\text{E}[X] = y/(k-1)$.

First, consider the case $y \geq 11k \log n$. As the $X_i$'s are $\tilde \Theta(n/k)$-wise independent,
all proxies by the component parts of $M_i$ are chosen independently and thus we can apply
a standard Chernoff bound (see, e.g., \cite{MitzenmacherU05}), which gives
\[
\text{Pr}\left(X \geq \frac{7y}{4(k-1)}\right) \leq e^{- 3y / 16(k-1)} < e^{\frac{-2k \log n}{k}} < \frac{1}{n^2}.
\]
By applying the union bound over the $k \leq n$ machines we conclude that w.h.p.\ every machine sends
$\tilde O(n/k^2)$ messages to each proxy machine, and this requires $\tilde O(n/k^2)$ rounds.

Consider now the case $y < 11k \log n$. It holds that $6\text{E}[X] = 6y/(k-1) < 6 \cdot 11k \log n /(k-1) \leq 132 \log n$,
and thus, by a standard Chernoff bound,
\[
\text{Pr}\left(X \geq 132 \log n\right) \leq 2^{-132 \log n} = \frac{1}{n^{132}}.
\]
Analogously to the first case, applying the union bound over the $k \le n$ machines yields the result.
\end{proof}

\subsection{Linear Graph Sketches}\label{sec:sketches}
As we will see in \Cref{sec:drr}, our algorithm proceeds by merging components across randomly chosen inter-component edges.
In this subsection we show how to provide these sampling capabilities in a communication-efficient way in the $k$-machine model
by implementing random linear graph sketches. Our description follows the notation of~\cite{McGregor14}.

Recall that each vertex $u$ of $G$ is associated with a unique integer ID from $[n]$ (known to its home machine)
which, for simplicity, we also denote by $u$.\footnote{Note that the asymptotics of our results do not change
if the size of the ID space is $O(\polylog(n))$.}
For each vertex $u$ we define the \emph{incidence vector $\avec_u \in \{-1,0,1\}^{n\choose 2}$ of $u$},
which describes the incident edges of $u$, as follows:
\begin{align*}
\avec_u[(x,y)] = \left\{
 \begin{array}{ll}
 1& \,\text{if $u = x < y$ and $(x,y) \in E$,}\\
 -1& \,\text{if $x < y = u$ and $(x,y) \in E$,}\\
 0& \,\text{otherwise.}\\
 \end{array}
 \right.
\end{align*}
Note that the vector $\avec_u + \avec_v$ corresponds to the incidence vector of the contracted edge $(u,v)$.
Intuitively speaking, summing up incidence vectors ``zeroes out'' edges between the corresponding vertices,
hence the vector $\sum_{u \in C} \avec_u$ represents the outgoing edges of a component $C$.

Since each incidence vector $\avec_u$ requires polynomial space, it would be inefficient to directly communicate
vectors to component proxies. Instead, we construct a random linear sketch $\sketch_u$ of $\polylog(n)$-size
that has the property of allowing us to sample uniformly at random a nonzero entry of $\avec_u$ (i.e., an edge incident
to $u$). (This is referred to as $\ell_0$-sampling in the streaming literature, see e.g.\ \cite{McGregor14}.)
It is shown in~\cite{JowhariST11} that $\ell_0$-sampling can be performed by linear projections.
Therefore, at the beginning of each phase $j$ of our algorithm, we instruct each machine to to
create a new (common) $\polylog(n) \times {n\choose 2}$ sketch matrix $L_j$, which we call \emph{phase $j$ sketch matrix}.\footnote{Here we describe the construction as if nodes have access to a source of shared randomness (to create the sketch matrix).\onlyLong{ We later show how to remove this assumption.}\onlyShort{ In the full paper we show how to remove this assumption.}}
Then, each machine $M$ creates a sketch $\sketch_u = L_j \cdot \avec_u$ for each vertex $u$ that resides on $M$.
Hence, each $\sketch_u$ can be represented by a polylogarithmic number of bits.

Observe that, by linearity, we have $L_j \cdot \avec_u + L_j \cdot \avec_v = L_j \cdot (\avec_u + \avec_v)$.
In other words, a crucial property of sketches is that the sum $\sketch_u + \sketch_v$ is itself a sketch that
allows us to sample an edge incident to the contracted edge $(u,v)$. We summarize these properties
in the following statement.

\begin{lemma} \label{lem:sketches}
Consider a phase $j$, and let $P$ a subgraph of $G$ induced by vertices $\{u_1,\dots,u_\ell\}$.
Let $\sketch_{u_1},\dots,\sketch_{u_\ell}$ be the associated sketches of vertices in $P$ constructed
by applying the \emph{phase $j$ sketch matrix} to the respective incidence vectors.
Then, the combined sketch $\sketch_P = \sum_{i=1}^\ell \sketch_{u_i}$  can be represented using
$O(\polylog(n))$ bits and, by querying $\sketch_P$, it is possible (w.h.p.) to sample a random edge
incident to $P$ (in $G$) that has its other endpoint in $G\setminus P$.
\end{lemma}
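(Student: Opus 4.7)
The plan is to decompose the lemma into two essentially independent ingredients: (i) a combinatorial ``cancellation'' identity about the sum of incidence vectors that reduces the problem to sampling a nonzero entry of a cut-indicator vector, and (ii) invoking the $\ell_0$-sampling result of Jowhari, Sa\u{g}lam and Tardos~\cite{JowhariST11} as a black box to do the sampling via a small linear sketch. First I would note that, by linearity of matrix-vector multiplication, $\sketch_P = \sum_{i=1}^\ell L_j \cdot \avec_{u_i} = L_j \cdot \avec_P$, where $\avec_P := \sum_{i=1}^\ell \avec_{u_i}$. So it suffices to understand the support of $\avec_P$ and then quote the sampling guarantee.

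Next I would establish the cancellation identity: the support of $\avec_P$ is exactly the set of cut edges $E(P, V\setminus P)$. For any unordered pair $\{x,y\}$ with $x < y$, the entry $\avec_P[(x,y)]$ is obtained by summing contributions from $\avec_{u_i}$ over the $u_i \in P$. If $(x,y) \notin E$, every contribution is $0$. If both $x,y \in P$, then exactly one of them (namely $u_i = x$) contributes $+1$ and exactly one (namely $u_i = y$) contributes $-1$, so the entry cancels to $0$. If exactly one of $x,y$ lies in $P$, the nonzero contribution is $+1$ (when $x \in P$, $y\notin P$) or $-1$ (when $y \in P$, $x\notin P$). Hence $\avec_P[(x,y)] \in \{-1,+1\}$ iff $(x,y)$ is a cut edge of $P$, and is $0$ otherwise. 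This is a short case analysis but it is the one genuinely substantive step.

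Now I would invoke the $\ell_0$-sampler of~\cite{JowhariST11}: there is a randomized linear map $L$ from $\mathbb{R}^{\binom{n}{2}}$ to $\mathbb{R}^{\polylog(n)}$ (using $\polylog(n)$ random bits and representable in $O(\polylog(n))$ bits per row) such that, for any fixed nonzero integer vector $\avec$ with entries bounded polynomially in $n$, one can recover from $L \cdot \avec$ a uniformly random coordinate of $\mathrm{supp}(\avec)$ with probability at least $1 - 1/n^c$. Taking $L_j$ to be (say, $c \log n$ independent copies of) such a sampler and applying it to $\avec_P$ yields, with the desired high probability, a uniformly random edge from $E(P, V \setminus P)$. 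The size bound on $\sketch_P$ is immediate: it has $\polylog(n)$ coordinates, and each coordinate is a sum of at most $\ell \le n$ integers bounded in magnitude by $\poly(n)$, hence representable in $O(\log n)$ bits, for a total of $O(\polylog(n))$ bits.

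I expect the main subtleties to be the bookkeeping ones rather than conceptual ones: ensuring success with probability $1 - 1/\poly(n)$ (handled by parallel repetition of the sampler, which preserves linearity and keeps the sketch size polylogarithmic), and justifying that the shared sketch matrix $L_j$ is available to all machines without an expensive coordination step. The latter is deferred in the paper but can be handled exactly as in \Cref{sec:proxy}: broadcast the $O(\polylog(n))$ random bits that seed $L_j$ via the same $\tilde O(n/k^2)$-round routing scheme (or, as the footnote indicates, with a slightly more elaborate construction using limited-independence hash families); this cost is absorbed into the overall round complexity.
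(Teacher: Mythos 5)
Your proposal is correct and follows essentially the same approach the paper takes: the paper does not give a formal standalone proof of this lemma but states it as a summary of the immediately preceding discussion, which establishes exactly your three ingredients (linearity of $L_j$, the cancellation of intra-$P$ edges in $\sum_i \avec_{u_i}$, and the $\ell_0$-sampling guarantee of~\cite{JowhariST11}), and the shared-randomness issue is deferred to the ``Constructing Linear Sketches Without Shared Randomness'' paragraph just as you anticipated. Your explicit case analysis of $\avec_P[(x,y)]$ is a correct and slightly more careful write-up of the paper's one-line observation that $\avec_u + \avec_v$ is the incidence vector of the contracted edge.
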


\paragraph{Constructing Linear Sketches Without Shared Randomness} 
Our construction of the linear sketches described so far requires $\tilde O(n)$ fully independent random bits that would need to be shared by all machines.
It is shown in Theorem~1 (cf.\ also Corollary~1) of~\cite{CormodeF14} that it is possible to construct such an
$\ell_0$-sampler (having the same linearity properties) by using $\Theta(n)$ random bits that are only $\Theta(\log n)$-wise independent. 
Analogously as in \Cref{sec:proxy}, we can generate the required $\Theta(\log^2 n)$ true random bits at machine $M_1$,
distribute them among all other machines in $O(1)$ rounds, and then invoke Theorem~2.1 of~\cite{alon12}
at each machine in parallel to generate the required (shared) $\Theta(\log n)$-wise independent random bits for constructing the sketches.

\subsection{Outgoing Edge Selection}\label{sec:edge}
Now that we know how to construct a sketch of the graph neighborhood of any set of vertices,
we will describe how to combine these sketches in a communication-efficient way in the $k$-machine model.
The goal of this step is, for each (current) component $C$, to find an outgoing edge that connects $C$
to some other component $C'$.

Recall that $C$ itself might be split into parts $P_1,P_2,\dots,P_j$ across multiple machines.
Therefore, as a first step, each machine $M_i$ locally constructs the combined sketch for each part that resides in $M_i$.
By \Cref{lem:sketches}, the resulting sketches have polylogarithmic size each and present a sketch of the incidences
of their respective component parts. Next, we combine the sketches of the individual parts of each component $C$
to a sketch of $C$, by instructing the machines to send the sketch of each part $P_i$ (of component $C$) to the proxy
machine of $C$. By virtue of \Cref{lem:proxyComm}, all of these messages are delivered to the component proxies
within $\tilde O(n/k^2)$ rounds. Finally, the component proxy machine of $C$ combines the received sketches to
yield a sketch of $C$, and randomly samples an outgoing edge of $C$ (see \Cref{lem:sketches}).
Thus, at the end of this procedure, every component (randomly) selected exactly one neighboring component.
We now show that the complexity of this procedure is $\tilde O(n/k^2)$ w.h.p.

\begin{lemma}\label{lem:outgoing}
Every component can select exactly one outgoing edge in $\tilde O(n/k^2)$ rounds with high probability.
\end{lemma}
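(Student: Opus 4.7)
The plan is to reduce this lemma to an application of \Cref{lem:proxyComm} together with the sketch-size bound from \Cref{lem:sketches}. The key observation is that all the heavy lifting (edge sampling, sketch combination) happens either via local computation, or through exactly the kind of component-part-to-proxy communication pattern that \Cref{lem:proxyComm} was designed to handle.

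First, I would argue that each machine $M_i$ can locally build, in zero rounds, a combined sketch $\sketch_{P}$ for every component part $P$ residing on $M_i$: by \Cref{lem:sketches} this is just the sum of the per-vertex sketches $\sketch_u$ for $u \in P$, each of which $M_i$ can compute from its known incidence vector $\avec_u$ and the (commonly held) phase-$j$ sketch matrix $L_j$. Crucially, by \Cref{lem:sketches} each $\sketch_P$ has size $O(\polylog(n))$ bits, so it fits in a single message across a $\polylog(n)$-bandwidth link.

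Next, the routing step: each machine $M_i$ needs to send, for each component part $P$ it holds, the message $\sketch_P$ to the proxy machine of the component of which $P$ is a part. This is precisely the communication pattern handled by \Cref{lem:proxyComm}: one polylog-sized message per component part, addressed to that component's proxy. I would observe that under the random vertex partition, machine $M_i$ hosts $\tilde\Theta(n/k)$ vertices w.h.p., hence at most $\tilde O(n/k)$ distinct component parts, so the preconditions of \Cref{lem:proxyComm} are met. Invoking it directly yields delivery of all sketches to their respective component proxies within $\tilde O(n/k^2)$ rounds w.h.p.

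Finally, once the proxy $M_C$ of component $C$ has received the sketches $\sketch_{P_1},\dots,\sketch_{P_j}$ of all parts of $C$ (note that there are at most $k$ such parts, so this contributes at most $k \cdot \polylog(n)$ bits to $M_C$'s total reception, which is absorbed by the $\tilde O(n/k^2)$ round bound since $k \le n/k^2$ in the interesting regime, and otherwise the $\polylog(n)$ additive term in $\tilde O$ suffices), it locally computes $\sketch_C = \sum_{i=1}^{j} \sketch_{P_i}$. By the linearity property of \Cref{lem:sketches}, $\sketch_C$ is a valid sketch of $C$'s outgoing boundary, and querying it returns a uniformly random outgoing edge of $C$ w.h.p. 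Summing the (zero) local costs and the $\tilde O(n/k^2)$ routing cost gives the claimed bound.

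The main subtlety, and the only place where I would need to be careful, is verifying that the proxy machines are not overloaded on the \emph{receiving} side: a single component can have up to $k$ parts, all targeting the same proxy, which is distinct from the per-link load analyzed in \Cref{lem:proxyComm}. However, the total number of incoming messages at any proxy is bounded by the number of parts of the components it is responsible for, and since a proxy is chosen by a $d$-wise independent hash function with $d = \tilde\Theta(n/k)$, the same Chernoff-style argument used in \Cref{lem:proxyComm} (applied to receive-load rather than per-link load) shows that each machine is the proxy of at most $\tilde O(n/k)$ components w.h.p., each contributing at most $k$ messages, for a total of $\tilde O(n)$ incoming bits distributed over $k-1$ links, i.e., $\tilde O(n/k^2)$ rounds. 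This matches the per-link bound and closes the argument.
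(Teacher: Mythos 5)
Your main argument is correct and follows essentially the same route as the paper: build sketches locally, route one polylog-sized message per component part to its component proxy (exactly the communication pattern handled by Lemma~\ref{lem:proxyComm}), and have the proxy sum and sample.

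The closing paragraph about receive-side load, however, is both unnecessary and internally flawed. It is unnecessary because Lemma~\ref{lem:proxyComm} is already a \emph{per-link} bound: the $\tilde O(n/k^2)$ load shown there for a link $(M_i, M_1)$ bounds the receive traffic at $M_1$ on that link just as much as the send traffic at $M_i$, and the $k-1$ links into a proxy deliver in parallel, so no separate receive-side argument is needed. It is flawed because your estimate of $\tilde O(n)$ incoming messages at a single proxy overcounts by about a factor of $k$: the total number of component parts across all machines is at most $n$ (each part has at least one vertex), so a machine that serves as proxy for $\tilde O(n/k)$ components receives $\tilde O(n/k)$ messages, not $\tilde O(n)$ --- your estimate implicitly assumes every such component has $k$ parts. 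Worse, the arithmetic that follows does not work: $\tilde O(n)$ bits spread over $k-1$ links of $\polylog(n)$ bandwidth takes $\tilde O(n/k)$ rounds, not $\tilde O(n/k^2)$, so if the $\tilde O(n)$ estimate were right the argument would actually \emph{fail} to establish the lemma. Finally, a minor point: the paper's proof also accounts for the reverse proxy-to-parts communication (by replaying the same message schedule in reverse), which you omit; the lemma is arguably satisfied without it, but the paper leans on precisely that reusable schedule later in Lemma~\ref{lem:merging}.
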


\begin{proof}
Clearly, since at every moment each node has a unique component's label, each machine
holds $\tilde O(n/k)$ component's parts w.h.p. Each of these parts selected at most one edge,
and thus each machine ``selected'' $\tilde O(n/k)$ edges w.h.p. All these edges have to be sent
to the corresponding proxy. By Lemma~\ref{lem:proxyComm}, this requires $\tilde O(n/k^2)$ rounds.

The procedure is completed when the proxies communicate the decision to each
of the at most $k$ components' parts. This entails as many messages as in the first part to be
routed using exactly the same machines' links used in the first part, with the only difference
being that messages now travel in the opposite direction. The lemma follows.
\end{proof}

\subsection{Merging of Components}\label{sec:drr}
After the proxy machine of each component $C$ has selected one edge connecting $C$ to a different
component, all the neighboring components are merged so as to become a new, bigger component.
This is accomplished by relabeling the nodes of the graph such that all the nodes in the same (new)
component have the same label. Notice that the merging is thus only virtual, that is, component parts
that compose a new component are not moved to a common machine; rather, nodes (and their incident edges)
remain in their home machine, and just get (possibly) assigned a new label.

We can think of the components along with the sampled outgoing edges as a \emph{component graph} $\cC$.
We use the \emph{distributed random ranking} (DRR) technique~\cite{ChenP12} to avoid having long chains
of components (i.e., long paths in $\cC$). That is, we will (conceptually) construct a forest of directed trees
that is a subgraph (modulo edge directions) of the component graph $\cC$ and where each tree has depth $O(\log n)$.\footnote{Instead of using DRR trees, an alternate and simpler idea is the following. Let every component select a number in $[0,1]$. A merging can be done only if the outgoing edge (obtained from the sketch) connects a component with ID 0 to a component with ID 1. One can show that this merging procedure
also gives the same time bound.} %
The component proxy of each component $C$ chooses a rank independently and uniformly at random from $[0,1]$.
(It is easy to show that $\Theta(\log n)$ bits provide sufficient accuracy to break ties w.h.p.)
Now, the proxy machine of $C$ (virtually) connects $C$ to its neighboring component $C'$ if and only if
the rank chosen by the latter's proxy is higher. In this case, we say that $C'$ becomes the \emph{parent}
of $C$ and $C$ is a \emph{child} of $C'$.

\begin{lemma} \label{lem:drrComm}
After $\tilde O(n/k^2)$ rounds, the structure of the DRR-tree is completed with high probability.
\end{lemma}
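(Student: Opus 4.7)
The plan is to show that building the DRR structure reduces to a constant number of rounds of proxy-to-proxy communication, each of which can be bounded via (essentially) \Cref{lem:proxyComm}. After \Cref{lem:outgoing}, every component proxy $M_C$ already knows the sampled outgoing edge $(u,v)$ with $u\in C$ and $v\in C'$, and in particular it knows the label of the neighboring component $C'$. Applying the (publicly known) hash function $h_{j,\rho}$, $M_C$ also knows the identity of the proxy machine $M_{C'}$. Moreover, each proxy can pick its own rank locally using $\Theta(\log n)$ private random bits, so no communication is needed for the rank generation.

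The actual communication required to set up the DRR tree is as follows: every proxy $M_C$ sends a single $O(\log n)$-bit message containing its rank $r_C$ to $M_{C'}$; upon receiving ranks from all its DRR children, $M_{C'}$ replies to each with a single $O(\log n)$-bit acknowledgment indicating whether it became the parent. Each of these two communication steps is an instance of the pattern handled in \Cref{lem:proxyComm}: one $O(\polylog(n))$-bit message is generated per component part (in fact, per component) and addressed to the proxy of a target component, where the target is determined by the hash function $h_{j,\rho}$, which is $\tilde\Theta(n/k)$-wise independent. Since each machine hosts at most $\tilde O(n/k)$ component parts (and is proxy of at most $\tilde O(n/k)$ components) w.h.p., the same balls-into-bins / Chernoff argument used in the proof of \Cref{lem:proxyComm} shows that over any fixed link at most $\tilde O(n/k^2)$ messages must be transmitted w.h.p., and hence the whole exchange completes in $\tilde O(n/k^2)$ rounds.

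The main subtlety, and what I would highlight as the potentially delicate step, is that in the reply phase the source of each message is a proxy rather than an arbitrary home machine; one must verify that the relevant independence hypothesis still applies. But this is immediate: the \emph{destinations} of the reply messages are exactly the sources of the forward messages (i.e., the proxies of the querying components), so by $\tilde\Theta(n/k)$-wise independence of $h_{j,\rho}$ the destinations chosen by distinct messages are uniformly distributed, and the Chernoff bound of \Cref{lem:proxyComm} applies verbatim. A union bound over the (at most two) communication steps and over the $k\le n$ machines gives the claimed $\tilde O(n/k^2)$-round bound with high probability.
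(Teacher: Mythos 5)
Your proposal follows essentially the same route as the paper: reduce the construction of the DRR forest to two rounds of proxy-to-proxy communication and bound each via the machinery of \Cref{lem:proxyComm}. The paper is even terser — it notes that the component graph $\cC$ has $O(n)$ edges, instantiates \Cref{lem:proxyComm} for the forward (child-proxy to parent-proxy) direction, and then handles the replies by simply \emph{re-running the forward message schedule in reverse} over the bidirectional links.

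The one place your argument diverges is the reply step, and there the paper's treatment is actually cleaner. You argue that the reply-message destinations are determined by $h_{j,\rho}$ on distinct component labels, hence are $\tilde\Theta(n/k)$-wise independent, so ``the Chernoff bound of \Cref{lem:proxyComm} applies verbatim.'' But the Chernoff computation in \Cref{lem:proxyComm} feeds off the fact that a machine generates $y=\tilde O(n/k)$ messages (one per component part it holds). In the reply step, the number of replies a proxy machine $M_j$ must generate is $\sum_a \mathrm{indeg}_{\cC}(C'_a)$ over all components $C'_a$ proxied at $M_j$, and this is \emph{not} a priori bounded by the $\tilde O(n/k)$ count of proxied components — a single high-in-degree vertex of $\cC$ can concentrate many replies at one machine. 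So ``verbatim'' overstates it. The paper sidesteps the issue entirely: whatever per-link load the forward schedule achieves, the time-reversed schedule achieves the same load in the opposite direction, with no independence argument needed. You would do well to adopt that schedule-reversal observation in place of re-arguing the concentration bound; it is both simpler and unconditionally correct given the forward-step bound.
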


\begin{proof}
We need to show that every proxy machine of a non-root component knows its smaller-ranking parent
component and every root proxy machine knows that it is root. Note that during this step the proxy
machines of the child components communicate with the respective parent proxy machines.
Moreover, the number of messages sent for determining the ordering of the DRR-trees is guaranteed
to be $O(n)$ with high probability, since $\cC$ has only $O(n)$ edges. 
By instantiating \Cref{lem:proxyComm}, it follows that the delivery of these messages can be completed in $\tilde O(n/k^2)$ rounds w.h.p.

Since links are bidirectional, the parent proxies are able to send their replies
within the same number of rounds, by re-running the message schedule of the child-to-parent communication in reverse order.
\end{proof}

If a component has the highest rank among all its neighbors (in $\cC$), we call it a \emph{root component}.
Since every component except root components connects to a component with higher rank, the resulting structure 
is a set of disjoint rooted trees.

In the next step, we will merge all components of each tree into a single new component such that all
vertices that are part of some component in this tree receive the label of the root. Consider a tree $\cT$.
We proceed level-wise (in parallel for all trees) and start the merging of components at the leafs that
are connected to a (lower-ranking) parent component $C$.

\begin{lemma}\label{lem:merging}
There is a distributed algorithm that merges all trees of the DRR forest in $\tilde O(d n/k^2)$ rounds with high probability,
where $d$ is the largest depth of any tree. 
\end{lemma}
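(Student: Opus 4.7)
The plan is to run $d$ iterations, each costing $\tilde O(n/k^2)$ rounds, propagating each DRR tree's root label downward to all its descendants, so that after the $d$-th iteration every node belonging to a tree carries the label of that tree's root. The DRR parent of every component (an identity fixed once by the construction in \Cref{sec:drr}) is the only tree structure we need during merging.

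I would implement each iteration as three communication sub-steps. (i) Each machine $M$, for every part $P$ of a component $C$ it currently holds, sends a short announcement to the proxy of $C$ under a freshly chosen hash function; this informs the proxy of which machines hold parts of $C$. (ii) Every non-root component $C$'s proxy sends a query to the proxy of $C$'s DRR parent asking for its parent's current label, and the parent's proxy replies. (iii) $C$'s proxy then broadcasts the new label to all machines holding parts of $C$, which update the label of those parts locally.

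Sub-step (i) is exactly the communication pattern of \Cref{lem:proxyComm} and finishes in $\tilde O(n/k^2)$ rounds w.h.p. Sub-step (ii) involves $O(n)$ total proxy-to-proxy messages with i.i.d.\ uniformly random destinations over the $k$ machines, so the Chernoff argument inside the proof of \Cref{lem:proxyComm} applies essentially verbatim. Sub-step (iii) is the reverse of (i): each machine ultimately receives at most one message per part it holds, from the proxy of that part's component; because the $k$-machine links are bidirectional and the forward routing schedule can be replayed in reverse, the same $\tilde O(n/k^2)$ bound applies. Correctness is an induction on depth: after $i$ iterations every component at depth at most $i$ in its DRR tree carries the root's label. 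The base case $i=1$ is immediate since depth-$1$ components ask the root, whose label is never updated. For the inductive step, when a depth-$(i+1)$ component queries its depth-$i$ parent in iteration $i+1$, the parent's current label is already the root's label by the inductive hypothesis, so the update assigns the correct value. Taking $i=d$ completes the merging, and summing the per-iteration cost gives $\tilde O(dn/k^2)$ rounds.

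The main obstacle, and the reason \Cref{lem:proxyComm} is doing the heavy lifting, is sub-step (iii): a single component could in principle have up to $k$ parts spread across all machines, so a naive proxy-to-parts broadcast could overload a single link and blow the bound up to $\tilde\Omega(n/k)$. Relying on the randomness of the proxy hash function together with the bidirectionality of the links turns this broadcast into a well load-balanced routing task, which is precisely what keeps each iteration within $\tilde O(n/k^2)$ rounds.
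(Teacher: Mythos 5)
Your proof is correct and follows the same high-level strategy as the paper's: $d$ iterations, each bounded by $\tilde O(n/k^2)$ rounds through proxy-based routing, with the $\tilde O(n/k^2)$-per-iteration cost following from \Cref{lem:proxyComm} (and its bidirectional reversal) exactly as you argue. The one substantive difference is the direction of the relabeling: the paper merges bottom-up, absorbing the current leaves into their parents each iteration so that the tree shrinks by one level per iteration, whereas you propagate the root's label top-down, having every non-root component query its fixed DRR parent each iteration and proving by induction on depth that the root label reaches depth-$i$ components after $i$ iterations. Both schemes touch $\tilde O(n/k)$ parts per machine per iteration, so the cost analysis is unaffected either way. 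The one small thing your direction forces you to handle (which the paper's direction sidesteps) is that a deep component carries intermediate ancestor labels during the propagation, so the proxy address of a DRR node must be computed from a \emph{fixed} DRR identity rather than from a vertex's current component label --- under bottom-up merging the absorbed leaf's current label is automatically the correct routing key for its new component's proxy, which is slightly cleaner. You do gesture at this by fixing the DRR parent pointer once at construction time, so this is an implementation detail rather than a gap.
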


\begin{proof}
We proceed in $d$ iterations by merging the (current) leaf components with their parents in the tree. 
Thus it is sufficient to analyze the time complexity of a single iteration.
To this end, we describe a procedure that changes the component labels of all vertices that are in leaf
components in the DRR forest to the label of the respective parent in $\tilde O(n/k^2)$ rounds.

At the beginning of each iteration, we select a new proxy for each component $C$ by querying the shared
hash function $h_{j,\rho}(C)$, where $\rho$ is the current iteration number.
This ensures that there are no dependencies between the proxies used in each iteration.
We know from~\Cref{lem:drrComm} that there is a message schedule such that leaf proxies
can communicate with their respective parent proxy in $\tilde O(n/k^2)$ rounds (w.h.p.) and vice versa,
and thus every leaf proxy knows the component label of its parent.
We have already shown in \Cref{lem:outgoing} that we can deliver a message from each component part
to its respective proxy (when combining the sketches) in $\tilde O(n/k^2)$ rounds.
Hence, by re-running this message schedule, we can broadcast the parent label from the leaf proxy to each component part in the same time.
Each machine that receives the parent label locally changes the component label of the vertices that are in the corresponding part. 
\end{proof}

The following result is proved in~\cite[Theorem~11]{ChenP12}. To keep the paper self-contained
we also provide a direct and simpler proof for this result (see Appendix).

\begin{lemma}[{\cite[Theorem~11]{ChenP12}}]\label{lem:depthDRR}
The depth of each DRR tree is $O(\log n)$ with high probability.
\end{lemma}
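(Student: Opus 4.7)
The plan is to exploit the fact that the structure of the DRR forest decomposes into two independent pieces of randomness. For each component $C$, the outgoing edge selected via its linear sketch determines a deterministic ``successor'' $f(C)$, namely the component containing the other endpoint of that sampled edge. Crucially, $f$ depends only on the graph and on the randomness used to build the sketches, not on the ranks chosen in the DRR step. Conditional on $f$, each component $C$ then draws a rank $r_C$ independently and uniformly from $[0,1]$, and $C$ points to $f(C)$ in the DRR forest if and only if $r_{f(C)} > r_C$. Consequently, the path from $C$ to its root is precisely the maximal prefix of its $f$-orbit $C = C_1, C_2 = f(C_1), C_3 = f(C_2), \ldots$ along which the ranks are strictly increasing.

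First I would note that along any such strictly increasing prefix the components must be pairwise distinct, since equal ranks occur with probability zero (and anyway a $\Theta(\log n)$-bit representation breaks ties w.h.p.). Fixing an arbitrary component $C$, the probability that its depth is at least $L$ is therefore the probability that the first $L$ terms of its $f$-orbit $C_1, C_2, \ldots, C_L$ satisfy $r_{C_1} < r_{C_2} < \cdots < r_{C_L}$. Conditioning on the (arbitrary) choice of $f$, these are $L$ i.i.d.\ continuous random variables, so by symmetry each of the $L!$ orderings of their ranks is equally likely, yielding probability exactly $1/L!$ that they are strictly increasing.

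Taking a union bound over the at most $n$ possible starting components, the probability that the DRR forest has depth at least $L$ is at most $n/L!$. Choosing $L = c\log n$ for a sufficiently large constant $c$ and using the standard estimate $L! \ge (L/e)^L$ gives
\[
\frac{n}{L!} \;\le\; n\left(\frac{e}{c\log n}\right)^{c\log n} \;=\; o(n^{-\alpha})
\]
for any desired constant $\alpha > 0$, which delivers the claimed $O(\log n)$ high-probability bound on the depth. The only delicate point (as opposed to a genuine obstacle) is articulating the independence between the successor function $f$ and the rank assignment; once this is cleanly stated, the rest is just the symmetry argument above plus a union bound, so no hard computation is required.
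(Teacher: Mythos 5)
Your proof is correct, and it proves the same fact by a genuinely different (and arguably cleaner) route than the paper. The paper's argument also amounts to counting ``records'' along the path from a component up to its root, but it packages this as a sum of indicator variables $X_i$ (the event that the $i$-th node on the path is not the root), bounds $\Pr(X_i = 1) \le 1/(i+1)$, asserts the mutual independence of the $X_i$'s, and then finishes with a Chernoff bound plus a union bound over the at most $n$ paths. Your approach sidesteps both the independence assertion and the concentration inequality: once you condition on the sketch-determined successor function $f$ and observe that a path of depth at least $L$ forces the first $L$ ranks along the $f$-orbit to be strictly increasing, the probability is immediately at most $1/L!$, and $n/L!$ is superpolynomially small at $L = \Theta(\log n)$. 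The subtlety you flag --- that the $f$-orbit might revisit a component within its first $L$ steps --- is handled correctly, since in that case the event $r_{C_1} < \cdots < r_{C_L}$ has probability zero and the $1/L!$ bound only improves. Your explicit separation of the randomness into the sketch randomness determining $f$ and the independent ranks determining the DRR pointers is in fact a more transparent way to justify this than the paper's rather terse remark on why its record indicators are independent; what you lose is only that the paper's expectation computation ($\mathrm{E}[X] \le \log(n+1)$) comes for free from its decomposition, whereas your argument goes straight to the tail bound.
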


\subsection{Analysis of the Time Complexity}\label{sec:time}
We now show that the number of phases required by the algorithm to determine the connected
components of the input graph is $O(\log n)$.
At the beginning of each phase $i$, distributed
across the $k$ machines there are $c_i$ distinct components. At the beginning of the algorithm
each node is identified as a component, and thus $c_0 = n$. The algorithm ends at the completion
of phase $\varphi$, where $\varphi$ is the smallest integer such that $c_\varphi = cc(G)$, where $cc(G)$
denotes the number of connected components of the input graph $G$. If pairs of components
were merged in each phase, it would be straightforward to show that the process would terminate in
at most $O(\log n)$ phases. However, in our algorithm each component connects to its neighboring
component if and only if the latter has a higher rank. Nevertheless, it is not difficult to show
that this slightly different process also terminates in $O(\log n)$ phases w.h.p. (that is, components
gets merged ``often enough''). The intuition for this result is that, since components' ranks are
taken randomly, for each component the probability that its neighboring component has a higher
rank is exactly one half. Hence, on average half of the components will not be merged with their
own neighbor: each of these components thus becomes a root of one component, which means
that, on average, the number of new components will be half as well.

\begin{lemma}\label{lem:phases}
After $12 \log n$ phases, the component labels of the vertices correspond to the connected components of $G$ with high probability.
\end{lemma}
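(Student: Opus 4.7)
The plan is to track, for each connected component $K$ of $G$ separately, the number of sub-components (i.e., distinct labels used by the algorithm for vertices of $K$), and to show that this number shrinks by at least a factor of two in expectation per phase. Since the algorithm only merges components across edges of $G$, sub-components of $K$ can only be merged with other sub-components of $K$. Thus the evolution of these counts decouples across distinct connected components of $G$.

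Concretely, let $a_i(K)$ denote the number of sub-components of $K$ at the start of phase $i$, so that $a_0(K)=|V(K)|$, and let $b_i(K) := a_i(K)-1 \ge 0$. We aim to show that when $a_i(K)\ge 2$, every sub-component of $K$ successfully samples an outgoing edge (using \Cref{lem:outgoing}), and that in the DRR procedure (Section~\ref{sec:drr}) each such sub-component becomes a root of its DRR tree with probability exactly $1/2$: its selected neighbor's rank is, by symmetry and the independence of the uniformly random ranks, larger with probability $1/2$. Since $a_{i+1}(K)$ equals the number of DRR-trees contained in $K$, which equals the number of roots, linearity of expectation gives
\[
E\!\left[a_{i+1}(K)\,\middle|\,a_i(K)\right] = \tfrac{1}{2}\,a_i(K) \quad\text{when }a_i(K)\ge 2,
\]
and trivially $a_{i+1}(K)=1$ when $a_i(K)=1$. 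Substituting $a=b+1$ yields $E[b_{i+1}(K)\mid b_i(K)] \le b_i(K)/2$ in either case.

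Iterating this bound over phases gives $E[b_t(K)] \le b_0(K)/2^t \le n/2^t$. Taking $t = 12\log n$ gives $E[b_t(K)] \le n^{-11}$. Since $b_t(K)$ is a nonnegative integer, Markov's inequality yields $\Pr[b_t(K)\ge 1] \le n^{-11}$. A union bound over the at most $n$ connected components of $G$ shows that, with probability at least $1-n^{-10}$, every connected component $K$ of $G$ satisfies $b_t(K)=0$, i.e., is represented by exactly one label of the algorithm, which is precisely the claim.

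The only subtle point is that the indicator variables ``$C$ is a root'' are not jointly independent (two sub-components that select each other are perfectly anti-correlated), so a per-phase Chernoff-style concentration is not immediate. The main obstacle the proof has to sidestep is precisely this: I would avoid trying to concentrate $a_{i+1}(K)$ around its mean within a single phase, and instead push all the probability amplification across phases via the expectation bound above, which uses only linearity and therefore is insensitive to these correlations. All other ingredients (existence of outgoing edges for incomplete components, proper merging into a single label per DRR tree) are furnished by the results of Sections~\ref{sec:edge} and~\ref{sec:drr}.
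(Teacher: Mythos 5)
Your proof is correct, and it takes a genuinely different route from the paper's. The paper tracks a single global quantity, the number of ``participating'' (non-isolated) components $\bar C_i$, establishes $\mathrm{E}[\bar C_{i+1}] \le \mathrm{E}[\bar C_i]/2$, declares a phase ``successful'' if the conditional mean drops by at least a factor $3/4$, lower-bounds the success probability per phase via Markov, and then applies a Chernoff bound to the count of successful phases among $12\log n$ trials. You instead decompose by true connected component $K$, track $b_i(K) = a_i(K)-1$, iterate the per-phase expectation bound $\mathrm{E}[b_{i+1}(K)\mid b_i(K)]\le b_i(K)/2$ over all $t=12\log n$ phases to get $\mathrm{E}[b_t(K)]\le n^{-11}$, and finish by Markov (on an integer-valued nonnegative variable) plus a union bound over the connected components of $G$. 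What your route buys: there is no need for any per-phase ``successful phase'' bookkeeping nor any Chernoff-style concentration whatsoever -- all amplification comes from compounding the expectation drop across phases, and the decoupling across connected components is observed immediately (merges follow edges of $G$ and hence stay inside $K$, and fresh i.i.d.\ ranks per phase make each participating sub-component a DRR root with probability exactly $1/2$ regardless of which neighbor it sampled). What the paper's route buys is avoidance of the $K$-wise tracking: by lumping all participating components together it never has to argue about per-$K$ termination at all. One stylistic nit: the existence of an outgoing edge for each sub-component when $a_i(K)\ge 2$ is a consequence of $K$ being connected and \Cref{lem:sketches} (which is about the \emph{correctness} of the sampled sketch), whereas \Cref{lem:outgoing} that you cite is a statement about round complexity; but the substance of the step is right.
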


\begin{proof}
Replace the $c_i$'s with corresponding random variables $C_i$'s, and consider the stochastic
process defined by the sequence $C_0,C_1,\dots,C_\varphi$. Let $\bar C_i$ be the random
variable that counts the number of components that actually participate at the merging process
of phase $i$, because they do have an outgoing edge to another component. Call these components
\emph{participating components}. %
Clearly, by definition, $\bar C_i \leq C_i$.

We now show that, for every phase $i \in [\varphi-1]$, $\text{E}[\text{E}[\bar C_{i+1} \mid \bar C_i]]
\leq \text{E}[\bar C_{i}]/2$.
To this end, fix a generic phase $i$ and a random ordering of its $\bar C_i$ participating components.
Define random variables $X_{i,1},X_{i,2},\dots,X_{i,\bar C_i}$ where $X_{i,j}$ takes value $1$ if
the $j$-th participating component will be a root of a participating tree/component for phase $i+1$,
and $0$ otherwise. Then, $\bar C_{i+1} \mid \bar C_i = \sum_{j=1}^{\bar C_i} X_{i,j}$ is the number
of participating components for phase $i+1$. As we noticed before, for any $i \in [\varphi-1]$ and
$j \in [\bar C_i]$, the probability that a participating component will not be merged to its neighboring
component, and thus become a root of a tree/component for phase $i+1$ is exactly one half.
Therefore,
\[
\text{Pr}(X_{i,j} = 1) \leq 1/2.
\]
Hence, by the linearity of expectation, we have that
\[
\text{E}[\bar C_{i+1} \mid \bar C_i] =
\sum_{j=1}^{\bar C_i}\text{E}[X_{i,j}] =
\sum_{j=1}^{\bar C_i}\text{Pr}(X_{i,j} = 1)
\leq \frac{\bar C_i}{2}.
\]
Then, using again the linearity of expectation,
\[
\text{E}[\text{E}[\bar C_{i+1} \mid \bar C_i]] \leq
\text{E}\mleft[\frac{\bar C_i}{2}\mright] =
\frac{\text{E}[\bar C_{i}]}{2}.
\]

We now leverage this result to prove the claimed statement. Let us call a phase \emph{successful}
if it reduces the number of participating components by a factor of at most $3/4$. By Markov's
inequality, the probability that phase $i$ is not successful is
\begin{align*}
\text{Pr}\mleft(\text{E}[\bar C_{i+1} \mid \bar C_i] > \frac{3}{4} \text{E}[\bar C_i]\mright) &<
\frac{\text{E}[\text{E}[\bar C_{i+1} \mid \bar C_i] ]}{(3/4) \text{E}[\bar C_i]} \\
&\leq \frac{\text{E}[\bar C_i]}{2} \cdot \frac{4}{3\text{E}[\bar C_i]} \\
&= \frac{2}{3},
\end{align*}
and thus the probability that a phase of the algorithm is successful is at least $1/3$. Now consider
a sequence of $12 \log n$ phases of the algorithm. We shall prove that within that many phases the
algorithm w.h.p.\ has reduced the number of participating components a sufficient number of times
so that the algorithm has terminated, that is, $\varphi \leq 12 \log n$ w.h.p. Let $X_i$ be an indicator
variable that takes value $1$ if phase $i$ is successful, and $0$ otherwise (this also includes the
case that the $i$-th phase does not take place because the algorithm already terminated). Let
$X = \sum_{i=1}^{12 \log n} X_i$ be the number of successful phases out of the at most
$12 \log n$ phases of the algorithm. Since $\text{Pr}(X_i = 1) \geq 1/3$,
by the linearity of expectation we have that
\[
\text{E}[X] = \sum_{i=1}^{12 \log n} \text{E}[X_i] =
\sum_{i=1}^{12 \log n} \text{Pr}(X_i = 1) \geq
\frac{12 \log n}{3} =
4 \log n.
\]
As the $X_i$'s are independent we can apply a standard Chernoff bound, which gives
\[
\text{Pr}(X \leq \log n) \leq  e^{-4 \log n (3/4)^2 / 2} = e^{-\frac{9}{8}\log n} < \frac{1}{n}.
\]
Hence, with high probability $12 \log n$ phases are enough to determine all the components
of the input graph.
\end{proof}

\begin{theorem}\label{thm:connectivity}
There is a distributed algorithm in the $k$-machine model that determines the connected
components of a graph $G$ in $\tilde O(n/k^2)$ rounds with high probability.
\end{theorem}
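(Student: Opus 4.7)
The plan is simply to glue together the lemmas established in Sections~\ref{sec:proxy}--\ref{sec:drr} and the phase-count bound of Lemma~\ref{lem:phases}. Concretely, I would argue that a single phase of the algorithm can be executed in $\tilde O(n/k^2)$ rounds with high probability, and then observe that $O(\log n)$ phases suffice.

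For a single phase, I would trace through the steps in order. First, distribute $\tilde\Theta(n/k)$ random bits from $M_1$ to all other machines to set up the hash function $h_{j,\rho}$ that names the component proxies; by the routing argument of Section~\ref{sec:proxy} this costs $\tilde O(n/k^2)$ rounds. Second, each home machine locally constructs, for every component part it holds, the combined sketch guaranteed by Lemma~\ref{lem:sketches}; this is free since local computation is instantaneous. Third, each part sends its $\polylog(n)$-bit sketch to its component proxy; by Lemma~\ref{lem:proxyComm} this completes in $\tilde O(n/k^2)$ rounds w.h.p. Fourth, each proxy combines sketches and samples an outgoing edge, which by Lemma~\ref{lem:outgoing} also takes $\tilde O(n/k^2)$ rounds. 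Fifth, proxies draw random ranks and establish parent/child pointers in the DRR forest: by Lemma~\ref{lem:drrComm} this is another $\tilde O(n/k^2)$ rounds. Finally, we perform the level-by-level merging of Lemma~\ref{lem:merging}, which costs $\tilde O(d n/k^2)$ rounds, where $d$ is the largest tree depth in the DRR forest; by Lemma~\ref{lem:depthDRR} we have $d=O(\log n)$ w.h.p., so the merging step still costs $\tilde O(n/k^2)$ after absorbing the $\log n$ factor into the $\tilde O$ notation.

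By Lemma~\ref{lem:phases}, after $O(\log n)$ such phases the component labels coincide with the connected components of $G$ w.h.p. Multiplying the per-phase cost by $O(\log n)$ still gives $\tilde O(n/k^2)$ rounds in total, since $\tilde O$ swallows polylogarithmic factors. The only care needed is the union bound: Lemmas~\ref{lem:proxyComm}, \ref{lem:outgoing}, \ref{lem:drrComm}, \ref{lem:merging}, \ref{lem:depthDRR}, and \ref{lem:phases} each hold with probability at least $1-n^{-c}$ for a constant $c$ that can be made arbitrarily large, and we invoke them $O(\log^2 n)$ times across phases and iterations, so raising $c$ slightly ensures that all invocations succeed simultaneously with high probability.

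I do not expect any genuine obstacle at this stage; the engineering is complete and only bookkeeping remains. The only place to be mildly careful is that, in each phase, the hash function $h_{j,\rho}$ used to pick proxies is regenerated so that the independence hypothesis of Lemma~\ref{lem:proxyComm} applies afresh, and that the sketch matrix $L_j$ is regenerated so that the sample produced by Lemma~\ref{lem:sketches} is independent of sketches already revealed by the adversary's view from previous phases. Both points were addressed in Sections~\ref{sec:proxy} and \ref{sec:sketches}, so the conclusion of Theorem~\ref{thm:connectivity} follows by combining the ingredients exactly as described.
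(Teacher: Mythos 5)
Your proof is correct and follows essentially the same route as the paper's own proof of Theorem~\ref{thm:connectivity}: bound the cost of a single phase by chaining Lemmas~\ref{lem:proxyComm}, \ref{lem:outgoing}, \ref{lem:drrComm}, \ref{lem:merging}, and \ref{lem:depthDRR}, multiply by the $O(\log n)$ phase bound of Lemma~\ref{lem:phases}, and close with a union bound. Your write-up is slightly more explicit about the randomness setup and per-phase regeneration of $h_{j,\rho}$ and $L_j$, but the decomposition and key lemmas are identical to those in the paper.
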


\begin{proof}
By \Cref{lem:phases}, the algorithm finishes in $O(\log n)$ phases with high probability.
To analyze the time complexity of an individual phase, recall that it takes $\tilde O(n/k^2)$ rounds to sample an outgoing edge (see \Cref{lem:outgoing}).
Then, building the DRR forest requires $\tilde O(n/k^2)$ additional rounds, according to~\Cref{lem:drrComm}.
Merging each DRR tree $\cT$ in a level-wise fashion (in parallel) takes $\tilde O(d n/k^2)$ rounds
(see \Cref{lem:merging}), where $d$ is the depth of $\cT$ which, by virtue of \Cref{lem:depthDRR}, is bounded by $O(\log n)$.
Since each of these time bounds hold with high probability, and the algorithm consists of $O(\log n)$ phases with high probability,
by the union bound we conclude that the total time complexity of the algorithm is $\tilde O(n/k^2)$ with high probability.
\end{proof}

We conclude the section by noticing that it is easy to output the actual number of connected components
after the termination of our algorithm: every machine just needs to send ``YES'' directly to the proxies of
each of the components' labels it holds, and subsequently such proxies will send the labels of the components
for which they received ``YES'' to one predetermined machine. Since the communication is performed via
the components' proxies, it follows from Lemma~\ref{lem:proxyComm} that the first step takes $\tilde O(n/k^2)$
rounds w.h.p., and the second step takes only $O(\log n)$ rounds w.h.p.

\section{Applications}\label{sec:applications}

In this section we describe how to use our fast connectivity algorithm as a building block to solve
several other fundamental graph problems in the $k$-machine model in time $\tilde O(n/k^2)$.

\subsection{Constructing a Minimum Spanning Tree}
Given a weighted graph where each edge $e=(u,v)$ has an associated weight $w(e)$, initially
known to both the home machines of $u$ and $v$, the minimum spanning tree (MST) problem
asks to output a set of edges that form a tree, connect all nodes, and have the minimum possible
total weight. Klauck et al.~\cite{KlauckNPR15} show that $\tilde \Omega(n/k)$ rounds are
necessary for constructing any spanning tree (ST), assuming that, for every spanning tree
edge $e=(u,v)$, the home machine of $u$ and the home machine of $v$ must \emph{both}
output $(u,v)$ as being part of the ST. Here we show that we can break the $\tilde \Omega(n/k)$
barrier, under the slightly less stringent requirement that each spanning tree edge $e=(u,v)$
is returned by at least \emph{one} machine, but not necessarily by both the home machines of $u$ and $v$.

Our algorithm mimics the multi-pass MST construction procedure of~\cite{AhnGM12a},
originally devised for the (centralized) streaming model.
To this end we modify our connectivity procedure of \Cref{sec:algorithm}, by ensuring that
when a component proxy $C$ chooses an outgoing edge $e$, this is the \emph{minimum weight
outgoing edge} (MWOE) of $C$ with high probability.

We now describe the $i$-th phase of this MST construction in more detail.
Analogously to the connectivity algorithm in \Cref{sec:algorithm}, the proxy of each component $C$
determines an outgoing edge $e_0$ which, by the guarantees of our sketch construction
(\Cref{lem:sketches}), is chosen uniformly at random from all possible outgoing edges of $C$.

We then repeat the following edge-elimination process $t=\Theta(\log n)$ times:
The proxy broadcasts $w(e_0)$ to every component part of $C$.
Recall from \Cref{lem:outgoing} that this communication is possible in $\tilde O(n/k^2)$ rounds.
Upon receiving this message, the machine $M$ of a part $P$ of $C$ constructs a new sketch
$\sketch_u$ for each $u \in P$, but first zeroes out all entries in $\avec_u$ that refer to edges of weight $>w(e_0)$.
(See \Cref{sec:sketches} for a more detailed description of $\avec_u$ and $\sketch_u$.)
Again, we combine the sketches of all vertices of all parts of $C$ at the proxy of $C$, which in turn samples a new outgoing edge $e_1$ for $C$.
Since each time we sample a randomly chosen edge and eliminate all higher weight edges, it is easy to see that the edge $e_t$  is the MWOE of $C$ w.h.p.
Thus, the proxy machine of $C$ includes the edge $e_t$ as part of the MST output.
Note that this additional elimination procedure incurs only a logarithmic time complexity overhead.

At the end of each phase, we  proceed by (virtually) merging the components along their MWOEs
in a similar manner as for the connectivity algorithm (see \Cref{sec:drr}), thus requiring $\tilde O(n/k^2)$ rounds in total.

Let $E$ be the set of added outgoing edges.
Since the components of the connectivity algorithm eventually match the actual components of the input
graph, the graph $H$ on the vertices $V(G)$ induced by $E$ connects all vertices of $G$.
Moreover, since components are merged according to the trees of the DRR-process (see \Cref{sec:drr}), it follows that $H$ is cycle-free.

%
We can now fully classify the complexity of the MST problem in the $k$-machine model:

\begin{theorem} \label{thm:mst}
There exists an algorithm for the $k$-machine model that outputs an MST in
\begin{enumerate}[label=(\alph*)]
\item $\tilde O(n/k^2)$ rounds, if each MST-edge is output by at least one machine, or in
\item $\tilde O(n/k)$ rounds, if each MST-edge $e$ is output by both machines that hold an endpoint of $e$.
\end{enumerate}
Both bounds are tight up to polylogarithmic factors.
\end{theorem}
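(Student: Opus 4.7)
The plan is to establish both parts separately, starting with the upper bounds (using the construction sketched above the theorem statement) and then invoking known lower bounds.

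For part (a), I would formalize the MST algorithm described immediately above the theorem statement. The algorithm runs the connectivity procedure of \Cref{sec:algorithm}, but in each phase replaces the ``sample one outgoing edge'' step with the following MWOE-selection subroutine, repeated $t = \Theta(\log n)$ times per component $C$: the proxy of $C$ samples an outgoing edge $e_j$ via the sketch, broadcasts $w(e_j)$ to all component parts of $C$, and each machine rebuilds its local sketch after zeroing out incidence-vector entries corresponding to edges of weight $>w(e_j)$; the new sketches are summed at the proxy to sample $e_{j+1}$. The key claim to verify is that $e_t$ is the MWOE of $C$ with high probability: in each iteration, an edge of rank $r$ among the currently remaining outgoing edges survives with probability $r/(\text{remaining count})$, so the rank of the surviving minimum halves geometrically in expectation; after $\Theta(\log n)$ rounds only the minimum-weight outgoing edge remains w.h.p. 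Then merging proceeds along these MWOEs exactly as in \Cref{sec:drr}, so each phase costs $\tilde O(n/k^2)$ by \Cref{lem:outgoing,lem:drrComm,lem:merging,lem:depthDRR}, and \Cref{lem:phases} bounds the number of phases by $O(\log n)$.

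To argue correctness of the MST output, let $E^\star$ be the union of all edges sampled as MWOEs across all phases. Since the connectivity argument of \Cref{lem:phases} shows that the components eventually coincide with the connected components of $G$, the subgraph $H = (V,E^\star)$ spans each component of $G$. Acyclicity follows from the DRR merging: each phase contributes, for every DRR-tree, a set of edges forming a tree on its components (one MWOE per non-root component), and components are only merged with strictly different components, so no cycle is ever closed. Finally, $H$ is an MST by the standard cut property: every edge added is a MWOE of the cut separating its component from the rest of the graph at the moment of addition. The output requirement of part (a) is met by having the proxy machine that sampled the MWOE mark it as belonging to the MST.

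For part (b), the upper bound of $\tilde O(n/k)$ comes from simulating any standard congested-clique MST algorithm, or more directly from the $\tilde O(n/k)$ ST algorithm of Klauck et al.~\cite{KlauckNPR15} combined with GHS-style refinement: once the MST edges are known to \emph{some} machine (via part (a)), each such edge $(u,v)$ can be announced to the home machines of $u$ and $v$ by routing through proxies, which takes $\tilde O(n/k)$ rounds in total since $\Theta(n)$ messages must be delivered across $k-1$ links at some machines. Both lower bounds are inherited: the $\tilde\Omega(n/k^2)$ connectivity lower bound of \cite{KlauckNPR15} applies to MST (MST under uniform weights solves connectivity), yielding tightness for (a); and the $\tilde\Omega(n/k)$ ST lower bound of \cite{KlauckNPR15}, which crucially uses the ``both endpoints output'' criterion, applies verbatim to MST since any MST is a spanning tree, yielding tightness for (b).

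The main obstacle will be the careful probabilistic analysis of the iterated MWOE-selection subroutine: one must show that conditioning on the random choices made in earlier iterations does not degrade the independence needed for the geometric shrinkage argument, and that zeroing out entries of the incidence vectors preserves the linear-sketch guarantees of \Cref{lem:sketches}. Both are handled by noting that zeroing coordinates of $\avec_u$ is itself a linear operation commuting with sketch summation, and by taking a union bound over the $O(\log n)$ iterations times $O(n)$ components times $O(\log n)$ phases, which costs only additional polylogarithmic factors absorbed by the $\tilde O(\cdot)$ notation.
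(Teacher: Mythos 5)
Your proposal follows essentially the same route as the paper: for part (a) you formalize the sketch-based iterated MWOE-selection procedure layered on the connectivity algorithm (the paper's own argument is the informal text preceding the theorem and likewise invokes $\Theta(\log n)$ rounds of sample--broadcast--threshold--resketch, DRR merging, and the cut property), and for the lower bounds you invoke, as the paper does, the $\tilde\Omega(n/k^2)$ connectivity bound and the $\tilde\Omega(n/k)$ ST bound from Klauck et al.

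Two small points of divergence worth flagging. For part (b) you propose running part (a) and then routing each MST edge to both home machines; the paper instead gestures at GHS via the Conversion Theorem of~\cite{KlauckNPR15}. Your route is cleaner and works: the per-machine receive load is $O(n)$ (a high-degree vertex's home machine), and over $k-1$ links this is $\tilde O(n/k)$ rounds, with sender-side congestion dominated at $\tilde O(n/k^2)$. Second, in your geometric-shrinkage argument you state that an edge of rank $r$ survives with probability $r/(\text{remaining count})$; under the natural convention (rank $1$ = lightest) the survival probability of the rank-$r$ edge is $(m-r+1)/m$, since the sampled threshold's rank is uniform in $[m]$ and an edge survives iff its rank is at most the threshold's rank. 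The conclusion---the count of surviving edges is uniform in $\{1,\dots,m\}$ so its expectation is $(m+1)/2$, hence geometric decay and $\Theta(\log n)$ iterations suffice w.h.p.---is unaffected. Your observation that zeroing coordinates is a linear projection that commutes with sketch addition is correct and is precisely why using a freshly generated sketch at each threshold step (as the paper does implicitly by "constructing a new sketch") is sound.
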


\subsection{{\large\boldmath$O(\log n)$}-Approximation for Min-Cut}

Here we show the following result for the min-cut problem in the $k$-machine model.

\begin{theorem}
There exists an $O(\log n)$-approximation algorithm for the min-cut problem in the $k$-machine model
that runs in $\tilde O(n/k^2)$ rounds with high probability.
\end{theorem}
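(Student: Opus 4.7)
The plan is to reduce approximate min-cut to $O(\log n)$ connectivity queries on random edge-sampled subgraphs, invoking the primitive of \Cref{thm:connectivity} together with Karger's cut-sampling theorem. The key observation is that the threshold edge-sampling probability $p$ at which a random subgraph $G_p$ stops being connected pins down the min cut $\lambda$ of $G$ up to an $O(\log n)$ factor.

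First, to let every machine consistently decide which edges belong to the sampled subgraph without any coordination, I would generate shared randomness exactly as in \Cref{sec:proxy}: one machine tosses $\tilde O(n/k)$ random bits and distributes them in $\tilde O(n/k^2)$ rounds, from which every machine locally constructs a shared $\polylog(n)$-wise independent hash function $h$ on edge identifiers. Since in the RVP model both home machines of an edge know its identifier, they can agree on its membership in $G_p = \{e : h(e) \le p\}$ with no further communication; the same shared-randomness construction is then repeated $\Theta(\log n)$ times to produce independent hash functions $h_1,\dots,h_{\Theta(\log n)}$ for amplification.

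Next, I would perform a geometric sweep over the candidate thresholds $p_i = c \log n / 2^i$, for $i = 0,1,\dots,\lceil\log n\rceil$ and a sufficiently large constant $c$, invoking the connectivity algorithm of \Cref{thm:connectivity} on each $G_{p_i}^{(j)}$ to count its connected components, for each of the $\Theta(\log n)$ independent samples $j$. By Karger's sampling theorem, $G_p$ preserves the component count of $G$ with high probability as soon as $p\lambda = \Omega(\log n)$, while for $p\lambda = O(1)$ each single sample leaves some fixed min cut entirely unsampled with constant probability, so the $\Theta(\log n)$ independent trials produce a disconnected sample with high probability. Therefore, the smallest index $i^\ast$ for which \emph{every} repetition of $G_{p_{i^\ast}}^{(j)}$ matches the component count of $G$ itself satisfies $p_{i^\ast}\lambda \in [\Omega(1),\,O(\log n)]$, and returning $c \log n / p_{i^\ast}$ yields the required $O(\log n)$-approximation.

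The round complexity is dominated by the $O(\log^2 n)$ calls to the connectivity primitive, each costing $\tilde O(n/k^2)$ rounds by \Cref{thm:connectivity}, for a total of $\tilde O(n/k^2)$. The main subtlety I expect to address is justifying that the $\polylog(n)$-wise independence of each $h_j$ already suffices inside Karger's sampling analysis --- namely, that the probabilistic arguments about cut preservation only rely on bounded independence across the edges of any fixed cut --- which follows from a bounded-independence Chernoff bound of the kind already used in the proof of \Cref{lem:proxyComm}. A minor edge case is when $G$ itself is disconnected, which can be detected up front by one extra call to \Cref{thm:connectivity} on $G$.
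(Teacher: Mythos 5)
Your approach is the same as the paper's. The paper's proof is a two-line citation: it applies Karger's edge-sampling theorem via the Ghaffari--Kuhn reduction (sample edges at exponentially spaced probabilities, test connectivity at each scale, locate the threshold at which a sample disconnects) and substitutes the $\tilde O(n/k^2)$-round connectivity algorithm of Theorem~\ref{thm:connectivity} for Thurimella's subroutine; this is exactly the geometric sweep you reconstruct, and you additionally spell out the shared-randomness mechanism needed so that both home machines of an edge agree on its sampling status, a $k$-machine-specific point the paper leaves implicit under the citation. One small slip worth fixing: with $p_i = c\log n/2^i$ decreasing in $i$, you should take the \emph{largest} index $i^\ast$ at which every repetition of $G_{p_{i^\ast}}$ still matches the component count of $G$ (equivalently, one less than the first index at which some repetition disconnects), since the smallest such index is trivially $i=0$.
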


\begin{proof}
We use exponentially growing sampling probabilities for sampling edges and then check
connectivity, leveraging a result by Karger~\cite{Karger94}. This procedure was proposed
in~\cite{GhaffariK13} in the classic $\mathcal{CONGEST}$ model, and can be implemented in
the $k$-machine model as well, where we use our fast connectivity algorithm (in place of
Thurimella's algorithm~\cite{Thurimella97} used in~\cite{GhaffariK13}). The time complexity
is dominated by the connectivity-testing procedure, and thus is $\tilde O(n/k^2)$ w.h.p.
\end{proof}

\subsection{Algorithms for Graph Verification Problems}

It is well known that graph connectivity is an important building block for several graph verification problems
(see, e.g., \cite{DasSarmaHKKNPPW12}). We now analyze some of such problems, formally defined, e.g.,
in Section~2.4 of \cite{DasSarmaHKKNPPW12}, in the $k$-machine model.

\begin{theorem}\label{thm:verification}
There exist algorithms for the $k$-machine model that solve  the following verification problems in $\tilde O(n/k^2)$
rounds with high probability: spanning connected subgraph, cycle containment, $e$-cycle containment, cut, $s$-$t$ connectivity,
edge on all paths, $s$-$t$ cut, bipartiteness.
\end{theorem}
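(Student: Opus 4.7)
The plan is to reduce each of the listed verification problems to $O(1)$ invocations of the connectivity algorithm of \Cref{thm:connectivity}, possibly after a local modification of the input graph that requires no extra inter-machine communication. Since \Cref{thm:connectivity} runs in $\tilde O(n/k^2)$ rounds w.h.p.\ and the reductions all increase the instance size by at most a constant factor, the claimed bound will follow by a union bound.

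For the \emph{subgraph verification} flavor (spanning connected subgraph, cut, $s$-$t$ cut, $e$-cycle containment, edge on all paths), the input is a subgraph $H$ (or an edge set $C$) of $G$ given by having each edge marked at both its endpoints' home machines. Each machine can therefore locally restrict attention to $H$ (or to $G\setminus C$), so running the connectivity algorithm on this modified graph costs the same $\tilde O(n/k^2)$ rounds. Specifically: (i) $H$ is a spanning connected subgraph iff the connectivity algorithm on $H$ returns a single component; (ii) $C$ is a cut iff $G\setminus C$ has more components than $G$, which follows by running connectivity on both; (iii) $e=(u,v)$ lies on a cycle of $H$ iff $u$ and $v$ remain in the same component of $H\setminus\{e\}$; (iv) $e$ is on all $s$-$t$ paths iff $s$ and $t$ are separated in $G\setminus\{e\}$; (v) $C$ is an $s$-$t$ cut iff $s,t$ are in different components of $G\setminus C$. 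For the global \emph{cycle containment} test, note that $H$ is acyclic iff $|E(H)|$ equals $n$ minus the number of connected components of $H$; the edge count can be aggregated at a single machine in $\tilde O(n/k^2)$ rounds by letting every machine send its local edge-count to a designated machine through proxies as in \Cref{lem:proxyComm}, while the component count is produced by \Cref{thm:connectivity} (as noted at the end of \Cref{sec:algorithm}).

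For $s$-$t$ connectivity, run \Cref{thm:connectivity} and then have the home machine of $s$ and the home machine of $t$ broadcast the component label assigned to $s$ and $t$ respectively; this is $O(1)$ additional rounds. For \emph{bipartiteness}, I would use the standard bipartite double cover reduction: each machine locally creates two virtual copies $v_1,v_2$ of every vertex $v$ it hosts, and for every edge $(u,v)$ known to it, it installs the two virtual edges $(u_1,v_2)$ and $(u_2,v_1)$; the two copies of each vertex are co-located on the same physical machine, so no extra communication is needed to set this up. The resulting graph $G'$ has $2n$ vertices and is load-balanced whenever $G$ is, so \Cref{thm:connectivity} runs on $G'$ in $\tilde O(n/k^2)$ rounds. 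Then $G$ is bipartite iff $v_1$ and $v_2$ lie in different components of $G'$ for every $v$; this check is purely local at each home machine once the connected-component labels have been computed.

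The main technical subtlety will be making sure the reductions are compatible with the random vertex partition assumption and the output conventions of the $k$-machine model (for example, when we ``construct'' $H$, $G\setminus C$, or $G'$, no vertex migrates between machines, so balance is preserved and \Cref{thm:connectivity} applies verbatim). Everything else is bookkeeping: each reduction uses $O(1)$ calls to connectivity plus $O(1)$ additional broadcasts/aggregations that fit within the proxy-communication machinery of \Cref{lem:proxyComm}, so the total round complexity is $\tilde O(n/k^2)$ with high probability.
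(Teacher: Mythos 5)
Your proposal is correct and follows essentially the same route as the paper: each verification problem is handled by one or two invocations of the connectivity algorithm of \Cref{thm:connectivity} on a locally modified graph (no vertex migration, balance preserved), plus $O(1)$ extra rounds of aggregation. The paper delegates bipartiteness to the double-cover reduction of Ahn--Guha--McGregor and the spanning connected subgraph/cycle/$e$-cycle reductions to Das Sarma et al., whereas you spell these same reductions out explicitly (including the $|E(H)| = n - \mathrm{cc}(H)$ forest criterion for cycle containment, which only needs an $O(1)$-round aggregation of local edge counts, not the full proxy machinery you invoke); this is a more self-contained presentation of the identical argument, not a different proof.
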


\begin{proof}
We discuss each problem separately.
\begin{description}[itemsep=0.2ex,leftmargin=0cm]
\item[Cut verification:] remove the edges of the given cut from $G$, and then check whether the resulting graph is connected.
\item[$s$-$t$ connectivity verification:] run the connectivity algorithm and then verify whether $s$ and $t$ are in the same
connected component by checking whether they have the same label.
\item[Edge on all paths verification:] since $e$ lies on all paths between $u$ and $v$ iff $u$ and $v$ are disconnected
in $G \setminus \{e\}$, we can simply use the $s$-$t$ connectivity verification algorithm of previous point.
\item[$s$-$t$ cut verification:] to verify if a subgraph is an $s$-$t$ cut, simply verify $s$-$t$ connectivity of the graph after removing the edges of the subgraph.
\item[Bipartiteness verification:] use the connectivity algorithm and the reduction presented in Section~3.3 of \cite{AhnGM12a}.
\item[Spanning connected subgraph, cycle containment, and $e$-cycle containment verification:] these also follow from
the reductions given in~\cite{DasSarmaHKKNPPW12}.\qedhere
\end{description}
\end{proof}

\section{Lower Bounds for Verification Problems}\label{sec:LBs}

In this section we show that $\tilde \Omega(n/k^2)$ rounds is a fundamental lower bound for many graph
verification problems in the $k$-machine model. To this end we will use results from the classical theory
of communication complexity~\cite{KushilevitzN97}, a popular way to derive lower bounds in distributed
message-passing models~\cite{DasSarmaHKKNPPW12,Oshman14,PanduranganPS16}.

Even though many verification problems are known to satisfy a lower bound of $\tilde\Omega(D +\sqrt{n})$
in the classic distributed $\mathcal{CONGEST}$ model~\cite{DasSarmaHKKNPPW12}, the reduction
of~\cite{DasSarmaHKKNPPW12} encodes a $\Theta(\sqrt{n})$-instance of set disjointness, requiring at
least one node to receive $\tilde\Theta(\sqrt{n})$ information across a single short ``highway'' path or
via $\Theta(\sqrt{n})$ longer paths of length $\Theta(\sqrt{n})$. Moreover, we assume the
random vertex partition model, whereas the results of~\cite{DasSarmaHKKNPPW12}
assume a worst case distribution. Lastly, any pair of machines can communicate directly in the $k$-machine model,
thus breaking the $\Omega(D)$ bound for the $\mathcal{CONGEST}$ model.

Our complexity bounds follow from the communication complexity of $2$-player \emph{set disjointness} in
the \emph{random input partition model} (see~\cite{KlauckNPR15}). While in the standard model of communication
complexity there are $2$ players, Alice and Bob, and Alice (resp., Bob) receives an input vector $X$ (resp., $Y$)
of $b$ bits~\cite{KushilevitzN97}, in the random input partition model Alice receives $X$ and, in addition,
each bit of $Y$ has probability $1/2$ to be revealed to Alice. Bob's input is defined similarly with respect to $X$.
In the set disjointness problem, Alice and Bob must output $1$ if and only if there is
no index $i$ such that $X[i]=Y[i]=1$. The following result holds.

\begin{lemma}[{\cite[Lemma~3.2]{KlauckNPR15}}]\label{lem:setdisj}
For some constant $\epsilon >0$, every randomized communication protocol that solves set disjointness
in the random input partition model of $2$-party communication complexity with probability at least
$1- \epsilon$, requires $\Omega(b)$ bits.
\end{lemma}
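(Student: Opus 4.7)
My plan is to reduce from the standard 2-party communication complexity of set disjointness, for which classical results (Kalyanasundaram--Schnitger, Razborov) give an $\Omega(b)$ lower bound for any randomized protocol with sufficiently small constant error. Suppose, toward contradiction, that there is a protocol $\Pi$ solving disjointness in the random input partition model on $b$-bit inputs with error at most $\epsilon$ and communication $c = o(b)$; I would derive a contradiction.

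A natural first attempt would have Alice and Bob use public randomness to sample the revelation masks $R_A, R_B \in \{0,1\}^b$, then exchange the bits that should be revealed---Bob sending $\{Y_i : R_A[i] = 1\}$ to Alice, Alice sending $\{X_i : R_B[i] = 1\}$ to Bob---and finally run $\Pi$ on the simulated view. The main obstacle is that this naive simulation costs $\Theta(b)$ additional bits on average, which matches the standard disjointness lower bound itself, so the reduction cannot by itself produce a contradiction at the transcript level.

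To overcome this obstacle, I would pass through the \emph{information complexity} framework. First, I would fix a hard product-like distribution $\mu$ over inputs (such as the ``unique intersection'' distribution underlying the standard disjointness lower bound) and track the internal information cost of $\Pi$ coordinate by coordinate under $\mu$. The key quantitative step is to show that the bits revealed to Alice (resp.\ Bob) by the random input partition contribute only a bounded amount of information about the disjointness predicate per coordinate, because individual bits of $Y$ drawn from $\mu$ are only weakly correlated with the answer. A direct-sum decomposition of the information cost across the $b$ coordinates---analogous to the standard information-cost lower bounds for set disjointness---then yields an $\Omega(b)$ lower bound on the information cost of $\Pi$, and hence $c = \Omega(b)$, the required contradiction.

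The hardest step will be accounting for the extra information provided by the revelations in the coordinate-wise information-cost analysis: I must verify that these revelations shift the per-coordinate lower bound by at most a $(1 - \Omega(1))$ factor, so that the overall $\Omega(b)$ bound is preserved. This hinges on a careful choice of the hard distribution $\mu$ so that its marginals on individual coordinates leak very little useful information about the disjointness predicate, ensuring that the additional ``free'' revealed bits do not meaningfully reduce the problem's intrinsic difficulty.
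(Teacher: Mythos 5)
First, a point of orientation: the paper does \emph{not} prove this statement; it cites it verbatim as Lemma~3.2 of Klauck et al.\ (SODA~2015), so there is no ``paper's own proof'' to compare against in this document. Your task is therefore to supply a self-contained argument, and your blind attempt should be judged on its own merits.

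Your diagnosis of the naive reduction is correct: explicitly exchanging the revealed bits costs $\Theta(b)$ and kills the reduction. However, the fix you reach for---information complexity with a coordinate-wise direct sum---is considerably heavier than needed, and your framing of the key step (``shift the per-coordinate lower bound by at most a $(1-\Omega(1))$ factor'') does not quite match what actually happens. When $Y_i$ is revealed to Alice, the $i$-th \textsc{And} is essentially resolved for her and its per-coordinate information cost can drop to \emph{zero}, not by a constant factor; what saves the bound is that with probability $1/4$ a coordinate is unrevealed to \emph{both} parties and retains its full $\Omega(1)$ cost, so a constant \emph{fraction} of coordinates remain hard. Casting this as a uniform multiplicative dampening obscures the structure and leaves the ``hardest step'' you flag genuinely delicate.

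The cleaner and essentially standard route is a zero-overhead embedding, which you do not consider. Let $R_A,R_B \in \{0,1\}^b$ be the two revelation masks and sample both via public randomness (giving both masks to both players can only help, so a lower bound proved under this extra knowledge is only stronger). Let $S=\{\,i : R_A[i]=0 \text{ and } R_B[i]=0\,\}$; by a Chernoff bound $|S|\geq b/8$ except with probability $2^{-\Omega(b)}$. Given a standard set-disjointness instance $(X',Y')$ on $b'=b/8$ bits, Alice and Bob embed it into the coordinates of $S$ (via a publicly random injection) and pad every coordinate outside $S$ with $X_i=Y_i=0$. Now every bit that the random-partition model would reveal to Alice lies outside $S$ and is therefore a publicly known padding zero; the same holds for Bob. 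Thus both players can locally form exactly the views the protocol $\Pi$ expects---no extra communication is incurred---and $\Pi$'s answer on the constructed instance equals the disjointness of $(X',Y')$. Conditioning on $|S|\geq b'$ perturbs $\Pi$'s error by at most $2^{-\Omega(b)}$, so the simulation errs with probability at most $\epsilon + 2^{-\Omega(b)}$. By the classical $\Omega(b')$ randomized lower bound for $2$-party set disjointness, $\Pi$ must communicate $\Omega(b')=\Omega(b)$ bits. This gives the lemma directly, without invoking information cost, and is almost certainly closer in spirit to the cited proof than the machinery you propose.
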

Now we can show the main result of this section.

\begin{theorem}\label{thm:lowerbounds}
There exists a constant $\gamma>0$ such that any $\gamma$-error algorithm $\cA$ has round complexity
of $\tilde\Omega(n/k^2)$ on an $n$-node vertex graph of diameter $2$ in the $k$-machine model,
if $\cA$ solves any of the following problems: connectivity, spanning connected
subgraph, cycle containment, $e$-cycle containment, $s$-$t$-connectivity, cut, edge on all paths, and $s$-$t$-cut.
\end{theorem}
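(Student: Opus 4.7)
The plan is to establish the $\tilde\Omega(n/k^2)$ bound via a reduction from $2$-party set disjointness in the random input partition model, leveraging the cited lemma with $b = \Theta(n)$. The overall strategy is a simulation argument: a $T$-round $k$-machine algorithm $\cA$ for any of the listed problems is compiled into a communication protocol for set disjointness whose transcript length is $O(T \cdot k^2 \polylog n)$ bits, which combined with the $\Omega(b) = \Omega(n)$ lower bound forces $T = \tilde\Omega(n/k^2)$.

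First, I would construct a graph gadget of $\Theta(n)$ vertices and diameter $2$ in which the verification problem encodes an instance of set disjointness. It suffices to do this for (say) connectivity or $s$-$t$-connectivity, since the remaining problems in the statement reduce from these via the connections recorded in Das Sarma et al.\ (used earlier in \Cref{thm:verification} in the opposite direction). A natural template uses two distinguished vertices $s$ and $t$, a collection $v_1,\dots,v_b$ of middle vertices with $b=\Theta(n)$, and a fixed background subgraph of constant diameter that is disjoint from the encoding. The input bits $X[i]$ and $Y[i]$ control whether edges $(s,v_i)$ and $(v_i,t)$ appear in (or, for cut/subgraph variants, are removed from) the instance, so that a witness to non-disjointness is exactly a path $s{-}v_i{-}t$ realized by the inputs.

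Next, I would give the simulation. Partition the $k$ machines into two halves, assigning the first $k/2$ to Alice and the rest to Bob. Under the RVP model each vertex lands in a machine uniformly, hence at Alice or at Bob independently with probability $1/2$, which exactly matches the random input partition hypothesis of the cited lemma once we arrange that the input bit controlling an edge is owned by the home machine(s) of its endpoints. Alice and Bob then simulate their machines locally at zero communication cost; the only bits they must exchange are those that traverse one of the at most $(k/2)(k/2) = k^2/4$ cross-links of the cut between the two halves. Each such link carries $O(\polylog n)$ bits per round, so a $T$-round execution of $\cA$ induces a protocol of total length $O(T \cdot k^2 \polylog n)$; applying the lemma gives $T \cdot k^2 \polylog n = \Omega(n)$, i.e.\ $T = \tilde\Omega(n/k^2)$.

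The main obstacle I anticipate is producing a single gadget that simultaneously (i) has diameter exactly $2$, (ii) faithfully encodes set disjointness through the chosen verification predicate, and (iii) distributes its input under RVP so as to realize the random input partition model on the induced bit-sequences $X,Y$. Condition (i) is delicate because careless "shortcut" edges added to compress the diameter can create alternative $s$-to-$t$ routes that destroy the encoding. The intended workaround is to route all shortcut edges through a set of padding vertices that are disconnected from the encoding portion in the problem-relevant sense (e.g.\ they never lie on an $s$-$t$ path in the verified subgraph, or they belong to a separately-verified side of the cut), so that shortening the diameter does not interfere with the bit-by-bit disjointness encoding; a minor additional check is that the $\Theta(1)$ padding structure contributes only constantly many non-random bits, which does not affect the $\Omega(n)$ lower bound from the lemma.
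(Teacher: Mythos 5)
Your high-level plan coincides with the paper's: a reduction from $2$-party set disjointness in the random input partition model, Alice and Bob each simulating $k/2$ machines, and a round-to-bits translation of $\tilde O(k^2)$ bits per round across the $\Theta(k^2)$ cut links yielding $T \cdot k^2 \polylog n = \Omega(b) = \Omega(n)$ and hence $T=\tilde\Omega(n/k^2)$. That core argument is correct and is exactly what the paper does.

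Where you diverge is the base problem and the gadget, and there you have a genuine unresolved gap. You propose to encode disjointness directly into connectivity or $s$-$t$-connectivity of the input graph $G$, with edges $(s,v_i)$, $(v_i,t)$ present iff $X[i]=1$, $Y[i]=1$. But then $G$ itself varies with the inputs and can have isolated vertices (whenever $X[i]=Y[i]=0$) or disconnected $s,t$, so $G$ cannot have diameter $2$ as required; your proposed "padding vertices" fix is not specified, and any padding that actually restores diameter $2$ risks creating spurious $s$\textendash$t$ connections, a tension you yourself flag but do not resolve. The paper sidesteps this entirely by taking the base problem to be \emph{spanning connected subgraph} verification: the supergraph $G$ is fixed (it always contains $(s,t)$, $(u_i,v_i)$, $(s,u_i)$, $(v_i,t)$ for all $i$, so it trivially has diameter $2$), and the inputs $X,Y$ only control which of those edges are \emph{marked} as belonging to the candidate subgraph $H$ (namely $(s,u_i)\in H$ iff $X[i]=0$ and $(v_i,t)\in H$ iff $Y[i]=0$, so $H$ spans and is connected iff $X,Y$ are disjoint). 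The other verification problems are then obtained by the reductions of Das Sarma et al., in the same direction you suggest. To repair your proof you should adopt a fixed-$G$, varying-$H$ formulation (SCS is the clean choice), rather than trying to make the diameter constraint and a varying $G$ coexist. You should also make explicit how the random input partition hypothesis (Alice has all of $X$ plus a random half of $Y$, and symmetrically for Bob) is realized by the assignment of $u_i,v_i$ to $\cM_A$ or $\cM_B$, and handle the $O(1/k)$ added error from the event that $s$ and $t$ hash to the same simulated machine; these are small but necessary details that the paper spells out.
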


\begin{proof}
The high-level idea of the proof is similar to the simulation theorem of~\cite{DasSarmaHKKNPPW12}.
We present the argument for the spanning connected subgraph problem defined below.
The remaining problems can be reduced to the SCS problem using reductions similar to those in~\cite{DasSarmaHKKNPPW12}.

In the spanning connected subgraph (SCS) problem we are given a graph $G$ and a subgraph $H\subseteq G$
and we want to verify whether $H$ spans $G$ and is connected. We will show, through a reduction from $2$-party
set disjointness, that any algorithm for SCS in the $k$-machine model requires $\tilde \Omega(n/k^2)$ rounds.

Given an instance of the $2$-party set disjointness problem in the random partition model we will construct the following input graphs $G$ and $H$.
The nodes of $G$ consist of $2$ special nodes $s$ and $t$, and nodes $u_1,\dots,u_b$, $v_1,\dots,v_b$, for $b=(n-2)/2$. 
(For clarity of presentation, we assume that $(n-2)/2$ and $k/2$ are integers.)
The edges of $G$ consist of the edges $(s,t)$, $(u_i,v_i)$, $(s,u_i)$, $(v_i,t)$, for $1 \le i \le b$.

Let $\cM_A$ be the set of machines simulated by Alice, and let $\cM_B$ be the set of machines simulated by Bob, where $|\cM_A| = |\cM_B| = k/2$. 
First, Alice and Bob use shared randomness to choose the machines $M_X$ and $M_Y$ that receive the vertices $s$ and $t$.
If $M_X \ne M_Y$, then Alice assigns $t$ to a machine chosen randomly from $\cM_A$, and Bob assigns $s$ to a random machine in $\cM_B$.
Otherwise, if $M_X$ and $M_Y$ denote the same machine, Alice and Bob output $0$ and terminate the simulation.

The subgraph $H$ is determined by the disjointness input vectors $X$ and $Y$ as follows:
$H$ contains all nodes of $G$ and the edges $(u_i,v_i)$, $(s,t)$, $1 \le i \le b$.
Recall that, in the random partition model, $X$ and $Y$ are randomly distributed between Alice and Bob, but Alice knows all $X$ and Bob knows all of $Y$.
Hence, Alice and Bob mark the corresponding edges as being part of $H$ according to their respective input bits. 
That is, if Alice received $X[i]$ (i.e.\ Bob did \emph{not} receive $X[i]$), she assigns the node $u_i$ to a
random machine in $\cM_A$ and adds the edge $(s,u_i)$ to $H$ if and only if $X[i]=0$.
Similarly, the edge $(v_i,t)$ is added to $H$ if and only if $Y[i]=0$ (by either Alice or Bob depending on who receives $Y[i]$).
See Figure~\ref{fig:lowerbound}.
Note that, since $X$ and $Y$ were assigned according to the random input partition model, the resulting
distribution of vertices to machines adheres to the random vertex partition model.
Clearly, $H$ is an SCS if and only if $X$ and $Y$ are disjoint.

\begin{figure}[h]
    \begin{center}
       \includegraphics[width=0.42\columnwidth]{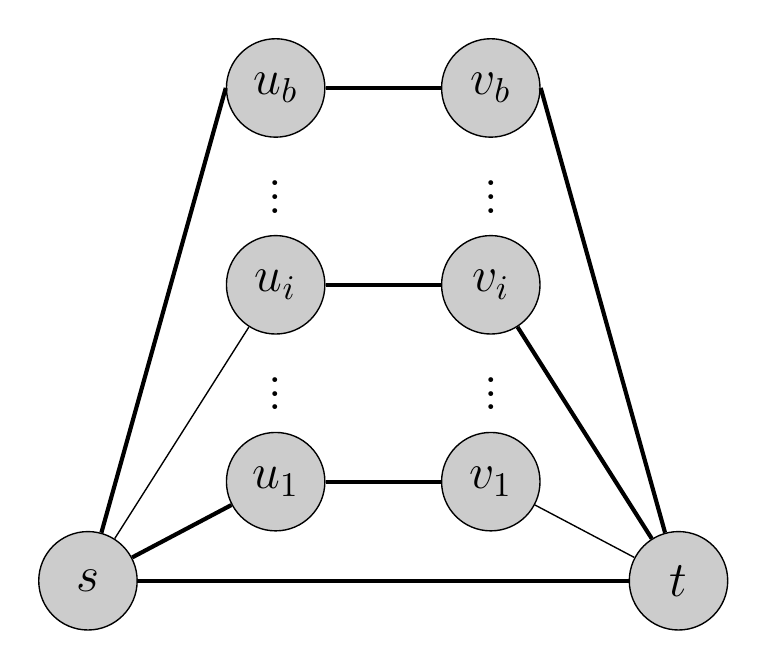}
       \caption{The graph construction for the spanning connected subgraph problem, given a set disjointness instance where $X[1]=0$, $Y[1]=1$, $X[i]=1$, $Y[i]=0$, and $X[b]=Y[b]=0$. The thick edges are the edges of subgraph $H$. The subgraph $H$ contains all edges $(u_i,v_i)$ ($1\le i \le b$) and $(s,t)$; 
       the remaining edges of $H$ are determined by the input vectors $X$ and $Y$ of the set disjointness instance.}
       \label{fig:lowerbound}
    \end{center}
\end{figure}

We describe the simulation from Alice's point of view (the simulation for Bob is similar):
Alice locally maintains a counter $r_A$, initialized to $1$, that represents the current round number. 
Then, she simulates the run of $\cA$ on each of her $k/2$ machines, yielding a set of $\ell$ messages
$m_1,\dots,m_\ell$ of $O(\polylog(n))$ bits each that need to be sent to Bob to simulate the algorithm on his machines in the next round.  
By construction, we have that $0 \le \ell \le \lceil k^2 /4\rceil$.
To send these messages in the (asynchronous) $2$-party random partition model of communication complexity,
Alice sends a message $\langle \ell, (M_1,m_1,M_2),\dots,(M_\ell,m_\ell,M_{\ell+1}) \rangle$ to Bob, where a
tuple $(M_i,m_i,M_{i+1})$ corresponds to a message $m_i$ generated by machine $M_i$ simulated by
Alice and destined to machine $M_{i+1}$ simulated at Bob. Upon receiving this message, Bob increases its
own round counter and then locally simulates the next round of his machines by delivering the messages to the appropriate machines.
Adding the source and destination fields to each message incurs an overhead of only $O(\log k) = O(\log n)$ bits,
hence the total communication generated by simulating a single round of $\cA$ is upper bounded by $\tilde O(k^2)$.
Therefore, if $\cA$ takes $T$ rounds to solve SCS in the $k$-machine model, then this gives
us an $O(T k^2 \polylog(n))$-bit communication complexity protocol for set disjointness in the random
partition model, as the communication between Alice and Bob is determined by the communication across
the $\Theta(k^2)$ links required for the simulation, each of which can carry $O(\polylog(n))$ bits per round.
Note that if $\cA$ errs with probability at most $\gamma$, then the simulation errs with probability at most
$\gamma+1/k$, where the extra $1/k$ term comes from the possibility that machines $M_X$ and $M_Y$ refer to the same machine.
For large enough $k$ and small enough $\gamma$ we have $\gamma+1/k < \epsilon$.
It follows that we need to simulate at least $T=\tilde\Omega(n/k^2)$ many rounds, since by Lemma~\ref{lem:setdisj} the
set disjointness problem requires $\Omega(b)$ bits in the random partition model, when the error is smaller than $\epsilon$.
\end{proof}

Interestingly, our lower bounds hold even for graphs of diameter $2$, which is in contrast
to the analogous results for the classic distributed $\mathcal{CONGEST}$ model assumed
in~\cite{DasSarmaHKKNPPW12}. We remark that the lower bound of connectivity verification was already shown in~\cite{KlauckNPR15}.

\section{Conclusions}

There are several natural directions for future work.
Our connectivity algorithm is randomized: it would be interesting to study the deterministic
complexity of graph connectivity in the $k$-machine model.
Specifically, does graph connectivity admit a $\tilde{O}(n/k^2)$ deterministic algorithm? 
Investigating higher-order connectivity, such as $2$-edge/vertex connectivity, is also an interesting
research direction. A general question motivated by the algorithms presented in this paper is
whether one can design algorithms that have superlinear scaling in $k$ for other fundamental
graph problems. Some recent results in this directions are in~\cite{PanduranganRS16}.

\vspace{0.4cm}
\paragraph{Acknowledgments}
The authors would like to thank Mohsen Ghaffari, Seth Gilbert, Andrew McGregor, Danupon Nanongkai, and Sriram V.\ Pemmaraju for helpful discussions.

\bibliographystyle{plain}
\bibliography{biblio}

\newpage
\appendix

\section{Omitted Proofs}\label{sec:omitted}

\subsection{Proof of Lemma~\ref{lem:depthDRR}}

\begin{proof}
Consider one phase of the algorithm, and suppose that during that phase there are $n$ components.
(In one phase there are $c \leq n$ components, thus setting $c = n$ gives a valid upper bound to
the height of each DRR tree in that phase.) Each component picks a random rank from $[0,1]$.
Thus, all ranks are distinct with high probability. If the target component's rank is higher, then the
source component connects to it, otherwise the source component becomes a root of a DRR tree.

Consider an arbitrary component of the graph, and consider the (unique) path $P$ starting form
the node that represents the component to the root of the tree that contains it. Let $|P|$ be the
number of nodes of $P$, and assign indexes to the $|P|$ nodes of $P$ according to their position
in the path from the selected node to the root. (See Figure~\ref{fig:DRR-tree}.) 

\begin{figure}[h]
    \begin{center}
       \includegraphics[width=0.7\columnwidth]{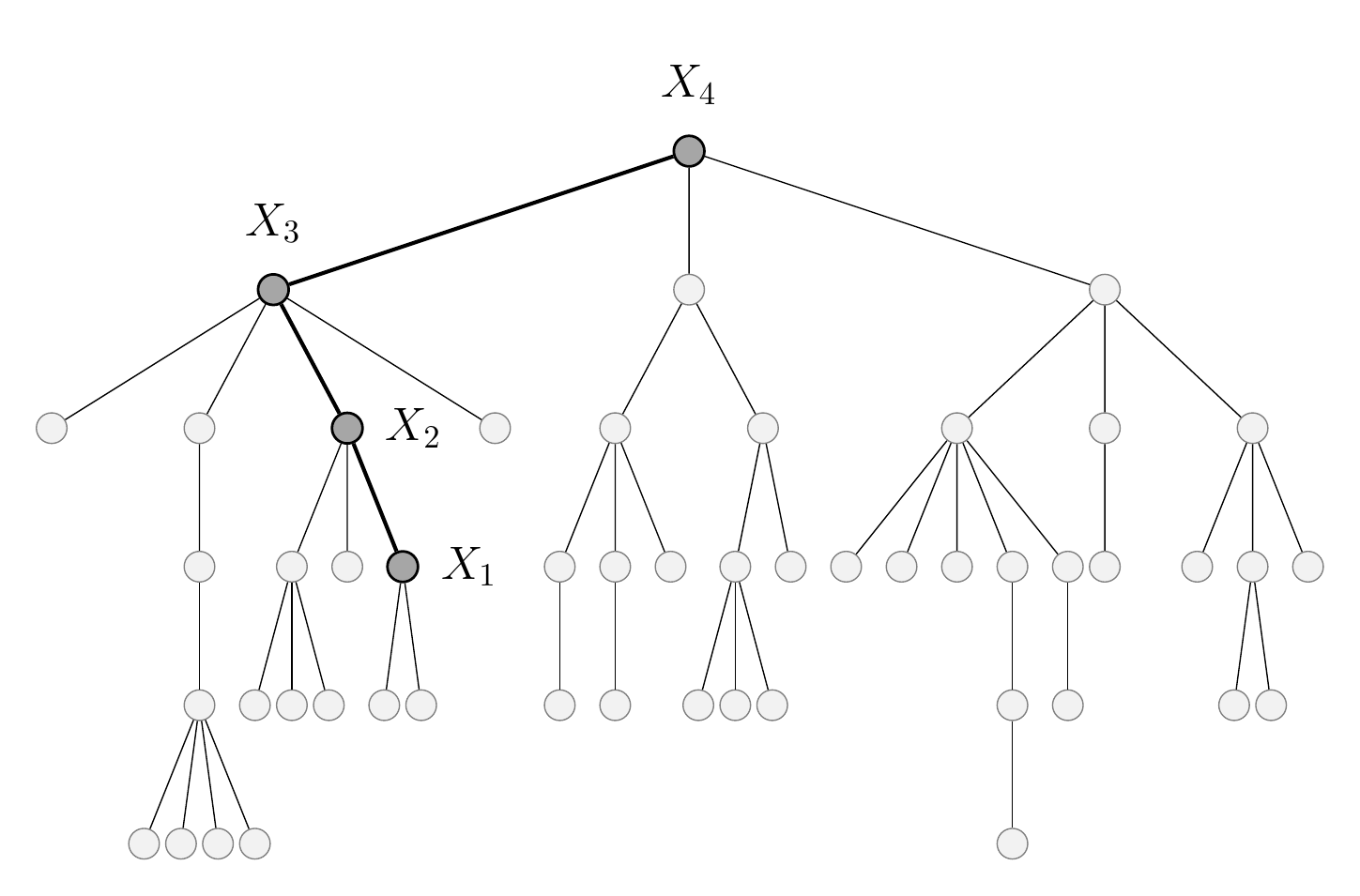}
       \caption{One DRR tree, and one path from one node to the root of the tree. Nodes of the path are
       labeled with the indicator variable associated to them, indexed by the position of the node in the path.}
       \label{fig:DRR-tree}
    \end{center}
\end{figure}

For each $i \in [|P|]$, define $X_i$ as the indicator variable that takes value $1$ if node $i$ is not
the root of $P$, and $0$ otherwise. Then, $X = \sum_{i=1}^{|P|} X_i$ is the length of the path $P$.
Because of the random choice for the outgoing edge made by components' parts, the outgoing edge
of each component is to a random (and distinct) component. This means that, for each $j \leq |P|$,
the ranks of the first $j$ nodes of the path form a set of $j$ random values in $[0,1]$. Hence,
the probability that a new random value in $[0,1]$ is higher than the rank of the $j$-th node of the
path is the probability that the new random value is higher than all the $j$ previously chosen random
values (that is, the probability its value is the highest among all the first $j$ values of the path),
and this probability is at most $1/(j+1)$. Thus, $\text{Pr}(X_i = 1) \leq 1/(i+1)$.
Hence, by the linearity of expectation, the expected height of a path in a tree produced by the
DRR procedure is
\begin{align*}
\text{E}[X] &= \sum_{i=1}^{|P|} \text{E}[X_i] \\
&\leq \sum_{i=1}^{n} \text{E}[X_i] \\
&= \sum_{i=1}^{n} \text{Pr}(X_i = 1) \\
&\leq \sum_{i=1}^{n} \frac{1}{i+1} \\
&\leq \log (n +1).
\end{align*}

Notice that the $X_i$'s are independent (but not identically distributed) random variables,
since the probability that the $i$-th smallest ranked node is not a root depends only on the
random neighbor that it picks, and is independent of the choices of the other nodes.
Thus, applying a standard Chernoff bound (see, e.g., \cite{MitzenmacherU05}) we have
\[
\text{Pr}(X \geq 6 \log (n+1)) \leq 2^{-6 \log (n+1)} = \frac{1}{(n+1)^6}.
\]
Applying the union bound over all the at most $n$ paths concludes the proof.
\end{proof}

\end{document}